\def \l {\left(}
\def \r {\right)}
\def \be {\begin{equation}}
\def \ee {\end{equation}}
\def \ba {\begin{aligned}}
\def \ea {\end{aligned}}
\newcommand \edit[1] {{#1}}
\newcommand \editt[1] {{{#1}}}
\title{The Economics of Quality Sponsored Data in Non-Neutral Networks}
\author{Mohammad Hassan Lotfi,
        Karthikeyan Sundaresan, Saswati Sarkar, and Mohammad Ali Khojastepour
\thanks{Mohammad Hassan Lotfi and Saswati Sarkar are affiliated with the Department of Electrical and System Engineering,\ at University of Pennsylvania, Philadelphia, PA. Karthikeyan Sundaresan and Mohammad Ali Khojastepour are affiliated with NEC Laboratories America, Princeton, NJ. Author's email addresses are \{lotfm,swati\}@ seas.upenn.edu, and \{karthiks,amir\}@ nec-labs.com.}
\thanks{Part of this work was presented in Wiopt'15 \cite{wiopt}.}
}
\begin{document}
\maketitle

\newtheorem{lemma}{\textbf{Lemma}}
\newtheorem{note}{\textbf{Note}}
\newtheorem{property}{\textbf{Property}}
\newtheorem{theorem}{\textbf{Theorem}}
\newtheorem{definition}{\textbf{Definition}}
\newtheorem{corollary}{\textbf{Corollary}}
\newtheorem{remark}{\textbf{Remark}}

\begin{abstract}
The growing demand for data has driven the Service Providers (SPs) to provide differential treatment of traffic to generate additional revenue streams from Content Providers (CPs). While SPs currently only provide best-effort services to their CPs, it is plausible to envision a model in near future, where CPs are willing to sponsor quality of service for their content in exchange of sharing a portion of their profit with SPs. This quality sponsoring becomes invaluable especially when the available resources are scarce such as in wireless networks, and can be accommodated in a non-neutral network. In this paper, we consider the problem of {\em Quality-Sponsored Data} (QSD) in a non-neutral network. In our model, SPs allow CPs to sponsor a portion of their resources, and price it appropriately to maximize their payoff. The payoff of the SP depends on the monetary revenue and the satisfaction of end-users both for the non-sponsored and sponsored content, while CPs generate revenue through advertisement. \editt{Note that in this setting, end-users still pay for the data they use.} We analyze the market dynamics and equilibria in two different frameworks, i.e. sequential and bargaining game frameworks, and provide strategies for (i) SPs: to determine if and how to price resources, and (ii) CPs: to determine if and what quality to sponsor. The frameworks characterize different sets  of equilibrium strategies and market outcomes depending on the parameters of the market.
\end{abstract}

\begin{IEEEkeywords}
net neutrality, game theory, sponsored data plans, wireless networks
\end{IEEEkeywords}

\maketitle \thispagestyle{empty}

\section{Introduction}

The growing demand for data and the saturating revenue of \edit{Broadband
Service Providers (SPs)} have driven the SPs to provide differential treatment of traffic to generate additional revenue streams from content providers (CPs). This occurs both in wired and wireless networks. In addition to the SPs generating revenue from the CPs, with such a model, CPs can ensure the quality of service they provide for their end-users particularly when resources are scarce such as in wireless networks.  

This differential treatment of traffic can be accommodated in a non-neutral network\footnote{\editt{An instance of a differential treatment in wireless cellular networks is AT\&T that has launched a \emph{sponsored data plan}~\cite{Att} that allows CPs to pay for the data bytes that their users consume, thereby not eating into the users' data quota. Thus, the content of the CPs that pay the sponsoring fee is treated differently than the content of the rest of the CPs.}}. This has raised serious concerns among net neutrality advocates, especially with the recent landmark ruling favoring Verizon in its case against the Federal Communication Committee (FCC) in January 2014~\cite{NetNeutrality}. In February 2015, FCC reclassified the Internet as a utility \cite{FCCutility}, providing the ground for this agency to secure even more strict net-neutrality rules. However, this will not be the end of the net-neutrality debates in U.S. Further actions, from the SPs and proponents of relaxing net-neutrality rules, are expected.  \edit{In addition, the net-neutrality issue is not restricted to the U.S. Many countries are debating on these rules, and have different policies for SPs to handle the traffic. For example, in October 2015, the European parliament has rejected legal amendments for strict net-neutrality rules, and passed a set of rules that allow for  sponsored data plans and Internet fast lanes for ``specialized services"\cite{EU_net_neutrality}.}

Given the incentives of SPs and some CPs to adopt a non-neutral regime, while SPs only provide best-effort service to their CPs in the current model, it is easy to envision a model in the near future, in which CPs require {\em quality of service} for the data they sponsor. For example, if YouTube wants to increase the number of its active users through sponsoring its videos, it would derive value (utility) from the sponsorship only if the videos are delivered at a good quality. 
We refer to this model as the {\em Quality-Sponsored Data} (QSD) model, wherein (spectral) resources at the SP are sponsored to ensure quality for the data bytes being delivered to the end users. This a significant departure from the current model in which CPs sponsor only the data bytes without considering any associated quality. Thus, there is no direct coupling between the scarce (wireless) resources at the SP and the market dynamics between the CP, SP and end-users. In contrast, the QSD model brings this coupling to the fore-front. 

Hence, the over-arching goal of this work is to analyze and understand the implications of the QSD model on the market dynamics, which we believe is both timely and important. Using game-theoretic \cite{MWG} tools, we study the market equilibria and dynamics under various scenarios and assumptions involving the three key players of the market, namely the CPs, SPs and end users. We investigate the scenarios under which the QSD model is plausible, and one can expect a stable outcome for the market that involves sponsoring \edit{the quality of the content} by CPs. In addition, we discuss about the division of profit between SPs and CPs in two cases (1) when the decision makers do not cooperate and at least one of them is myopic optimizer 
, and (2) when both cooperatively maximize the payoff in the long-run. In the process, we devise strategies for the CPs (respectively, SPs) to determine if they should participate in QSD, what quality to sponsor, and how the SPs should  price their resources. 

In our model, SPs make a portion of their resources available for sponsorship, and price it appropriately to maximize their payoff. Their payoff depends on monetary revenue and satisfaction of end-users both for the non-sponsored and sponsored content. Note that the QSD model couples market decisions to the scarce (wireless) resources. Thus, resources allocated to sponsored contents will affect those allocated to non-sponsored content and hence their quality. Thus, one should consider the impact of the quality of the two types of data (sponsored and non-sponsored) on satisfaction of end-users.

We consider one CP and one SP. We consider that the CP has an advertisement revenue model\footnote{A CP that earns money through advertisements.}, (Section~\ref{section:model}), and characterize the myopic pricing strategies for the CP and  the SP given the quality of the content that needs to be guaranteed and the available resource using a non-cooperative sequential game framework (Section~\ref{section:analysisI}). Assuming the demand for content to be dynamic, wherein the change in the demand is dependent on the quality end-users experience, we investigate the asymptotic behavior of the market when at most one of the decision makers (SP or CP) is short-sighted, i.e. not involving the dynamics of demand in their decision making. We show that depending on certain key parameters, such as the importance of non-sponsored data for SPs and the parameters of the dynamic demand, the market can be asymptotically (in long run) stable or unstable. Furthermore, four different stable outcomes are possible: 1. no-sponsoring, 2. maximum bit sponsoring: the CP sponsors all the available resources, 3. minimum quality sponsoring: the CP sponsors minimum resources to deliver a minimum desired rate to her users, and 4. Interior solution in which the CP sponsors more than the minimum but not all the available resources. We characterize the conditions under which each of these asymptotic outcomes is plausible. \edit{In Section~\ref{section:simulation_seq}, the effects of different market parameters on the asymptotic outcome of the market is investigated through numerical simulations.} Note that there may exist multiple equilibria, and a non-cooperative framework may lead to a Pareto-inefficient outcome. Thus, when both of the decision makers are long-sighted, it is natural to consider a cooperative scheme such as a bargaining game framework. Thus, in Section~\ref{section:bargaining}, we investigate the role of a CP and an SP with long-sighted business models\footnote{in which decision makers maximize their payoff in long-run considering the dynamics of the demand for the content.} in stabilizing the market and equilibrium selection. We characterize the Nash Bargaining Solution (NBS) of the game to determine the profit sharing mechanism between the SP and CP. \edit{We present the numerical results in Section~\ref{section:bargaining_simulation}. Finally, in Section~\ref{section:discussion_final}, we summarize the key results of the paper and comment on some of the assumptions and their generalizations.  }

\textbf{Related Works:}

New pricing schemes in the Internet market either target end-users or CPs. For the end-user side, different pricing schemes have been proposed to replace the traditional flat rate pricing  \cite{soumya_survey}, \cite{Tube}, \cite{ISIT}. These schemes can create additional revenue for SPs and provide a more flexible data plan for end-users. However, SPs are reluctant in adopting such pricing schemes due to the fact that these schemes are typically not user-friendly. Thus, SPs mainly focus on changing the pricing structure of the CP side, for which they should deal with net-neutrality rules.   

Works related to the emerging subject of sponsored content are scarce. In \cite{Mung}, \cite{andrews1}, \cite{andrews2}, and \cite{zhang}, authors investigate the economic aspects of content sponsoring in a framework similar to At\&t sponsored data plans. \editt{Note that in At\&t sponsored data plan, the CP pays for the quantity of the data carried to the end-users, while in our scheme the CP pays for the \emph{quality} of the data, and end-user is responsible for paying for the quantity.} We take into the account the quality of the content and the coupling it has with scarce  resources.  We consider more strategic CPs that decide on the portion of SP's resources they want to sponsor, based on the price SPs quote and the demand from end-users.  

This work falls in the category of economic models for a non-neutral Internet \cite{chiang,economides,asu,NetEcon,Katz,CISS},\editt{\cite{Kramer, Misra}}. A survey of the existing literature on the economics analysis of non-neutral markets is presented in \cite{survey}. Most of the works in this area study the social welfare of the market under neutrality and non-neutrality regimes (which includes [14-21]).  \edit{In these works the decision of CPs does not depend on the demand for the content, and simply is a take-it-or-leave-it choice, i.e. either the CP pays for the premium quality or uses the free quality. In addition, \editt{most of the works} do not consider the coupling between limited resources available to SPs and the strategies of the decision makers. \editt{Exceptions are \cite{Kramer} and \cite{Misra}.} We consider that CPs decide on the number of resources they want to sponsor based on the dynamics of their demand. Depending on the demand and number of resources available with the SP, the number of sponsored resources by the CP yields a quality of experience for users of sponsored and non-sponsored contents. Thus, we consider the coupling between market decisions and the limited wireless resources. Moreover, we study problems like stability of the market and the effects of being short-sighted or long-sighted.} Therefore, we focus on one-to-one interaction between CPs and SPs, and its implications on the payoff of individual decision makers.


The closest work to ours is \cite{richard} in which authors study the interaction between an SP and a CP when the CP can sponsor a quality higher than the minimum quality under a private contract with the SP. Their main focus is to compare the social welfare of the sequential game when either the SP or the CP is the leader, with the Pareto optimal outcome resulting from a bargaining game between the SP and the CP. Authors assume that the number of subscribers to the SP is an \emph{increasing} function of the quality it provides for the CP. In other words, as the quality for the sponsored content enhances, end-users of the SP become more satisfied. However, in our work, the main focus is the coupling between the limited resources and the quality. Thus, providing a better quality for a sponsored content may degrade the quality of non-sponsored contents in peak congestion times. Therefore, in our model, the satisfaction of end-users which is a function of both sponsored and the non-sponsored content is not necessarily increasing with respect to the sponsored quality. This changes the nature of the problem.

\section{Model}\label{section:model}

\subsection{Problem Formulation:}
We model the ecosystem as a market consisting of three players: CPs, SPs, and end-users. A summary of important symbols is presented in Table~\ref{table:symbols}. We focus on the interaction between SPs and CPs, and not on the competition among SPs and CPs. Thus the interaction between one CP and one SP is considered. The strategy for the CP is to determine how much resources to sponsor (i.e. quality), and the strategy of the SP is to determine how to price her resources. Decisions are made by the players  \edit{based on an estimated demand update function (explained later)} at the beginning of every time-epoch, which captures the typical time granularity of sponsorship decisions\footnote{\edit{Using the estimate of the demand, players decide on their strategy to maximize  an ``estimated" payoff (and not the actual one). Note that the shorter the interval of epochs, the more accurate the estimates, and the more inconvenient the implementation would be. \editt{We will observe that, in our framework, the algorithm of decision making in NE would be simple. Thus, the decision making can be done in shorter time intervals, e.g. minutes.}}}. 
\begin{figure}[t]
    \centering
    \includegraphics[width=0.45\textwidth,height=2cm]{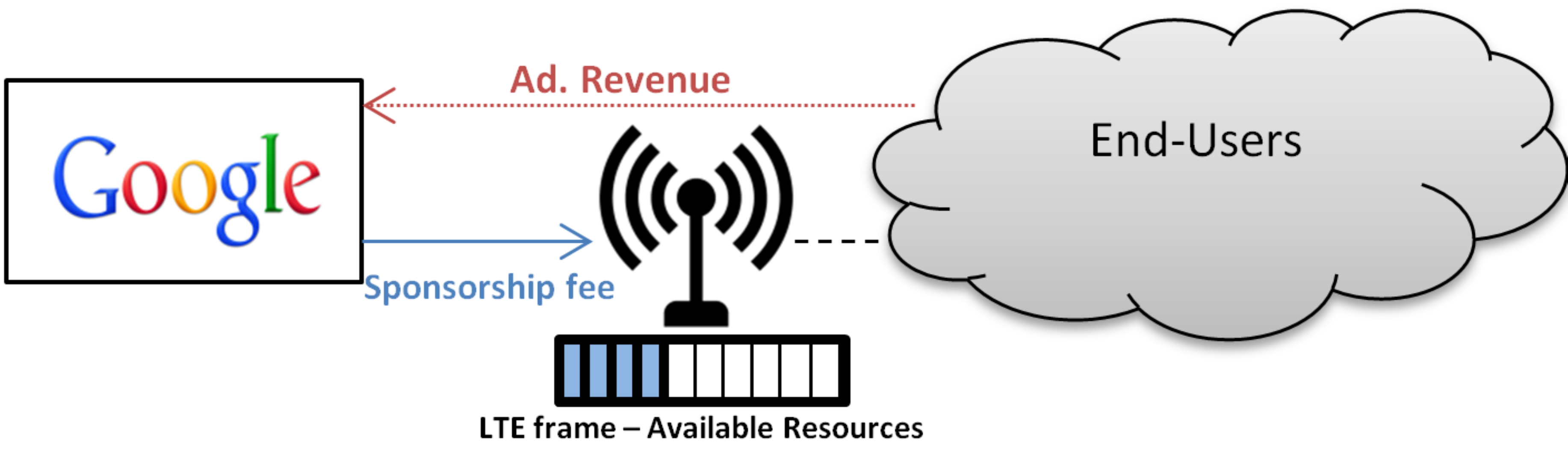}
    \caption{Market when CP has an advertisement revenue model}\vspace{-4mm}
    \label{figure:modelad}
\end{figure}  



The CP  has an advertisement revenue model, and sponsors $b_t$  resources (e.g. bits in an LTE frame) out of a total of $N$ resources at $t^{th}$ time-epoch to sponsor the \editt{average} quality  of at least $\zeta$ ($\frac{\text{bit}}{\text{frame}}$) for her content, and pays a price of $p_t$ per resource sponsored to the SP. \editt{Thus, on average a quality of $\zeta$ should be satisfied for the users. If not, the CP exits the sponsorship program. Note that this does not guarantee the quality of an individual user to be higher than $\zeta$.} An example Schematic picture of the market in this case is presented in Figure~\ref{figure:modelad}. 


The CP and the SP choose their strategy at time-epoch $t$ after observing the previous demand, i.e. \editt{the number of end-users desiring content from the previous epoch}. Obviously, the demand is non-negative. Note that the demand for the content of the CP changes over time depending on the satisfaction of users, which in turn depends on the resources that the CP decides to sponsor and hence the quality. \edit{We suppose that the satisfaction of users depends on the average quality, i.e. $\frac{b_t}{d_t}$}\footnote{\edit{Note that we are analyzing the model from the perspective of the CP and the SP. 	Thus, we are assuming that the CP and the SP see the demand for the content as a whole and want to sponsor an average sponsored quality. The demand of the individual end-users could be potentially different from each other. }}, and  the demand for content updates as follows\footnote{\editt{Note that receiving a satisfactory quality, increases the chance of user repeating the visit to the website and increases the number of new users that are going to use the service. Therefore, a satisfactory QoS will likely increase the demand for the data in the next session. In addition, we assume that the increase in the demand would be slower with high rates (a diminishing return behavior)}. },
\begin{eqnarray}\label{equ:d_t+1}
\small
d_{t+1} =
\left\{
\begin{array}{ll}
d_t \l 1+\gamma \log\l\kappa_u \frac{b_t}{d_t}\r\r^+  & \mbox{if }  d_t>0 \\
0 & \mbox{if } d_t=0
\end{array}
\right.
\end{eqnarray}  
\normalsize
where $z^+=\max\{z,0\}$, $d_t$ is the demand between epoch $t$ and $t+1$, $\frac{b_t}{d_t}$ is the rate a single user receives, and $\log\l\kappa_u \frac{b_t}{d_t}\r$ models the satisfaction of end users: the higher the rate received by users, the higher their satisfaction. 
The parameter $\gamma>0$ represents the \emph{sensitivity} of end-users to their satisfaction. A higher $\gamma$ is associated with a higher rate of change with respect to the satisfaction of users (higher fluctuation in demand). An instance of this type of users is customers of a streaming website like Netflix that are sensitive to the quality they receive. Parameter $\kappa_u>0$ is a constant.

Note that the total available wireless resources (for sponsored and non-sponsored contents) is limited ($N$). This limits the number of sponsored resources ($b_t$) which in term determines the upperbound of resources that can be allocated to non-sponsored contents. This is a key distinction of our work from previous works as the limited resources available couples the utility of end users for both sponsored and non-sponsored content with the decisions of the  market players. We assume that the  number of resources (bits) a CP can sponsor is bounded above by $\hat{N}$ ($\hat{N}\le N$).  

The utility of the CP if she chooses to enter the sponsorship program consists of the utility she receives by sponsoring the content minus the price she pays for sponsoring the sponsored bits. The latter is $p_t b_t$. The former, i.e. the utility of the CP for sponsoring the content, depends on the advertisement revenue which in turns depends on the demand for the content as well as the quality received by the users (throughput is $\frac{b_t}{d_t}$) \footnote{\edit{Note that $\frac{b_t}{d_t}$ can be the quality of the content, ads, or both. One example of CPs whose revenue depends on the quality of the ad is a CP that support video ads, e.g. YouTube.  Another example is websites loaded with several ``flash ads" for which users may have difficulty loading the ads which can lead to the decrease of number of clicks on the ads. In addition to these CPs,  we can think about scenarios in which increasing the quality of the content of a website (not only the ads) increases the revenue of this website. An example of such contents are shopping websites (e.g. Amazon). Improving the quality of the experience of users, increases the chance of spending more time on these website. This would increase the chance of a transaction which increases the revenue of the CP.}}. We consider the utility from advertisement for the CP to be:
\begin{eqnarray}
\small
u_{ad}(b_t) =
\left\{
	\begin{array}{ll}
		 \alpha d_t \log\l\frac{\kappa_{CP} b_t }{d_{t}} \r    & \mbox{if }  d_t>0 \\
		0 & \mbox{if } d_t=0
	\end{array}
\right.
\end{eqnarray}  
\normalsize

Note that the better the quality of advertisement, more successful the advertisement would be, and therefore the higher the utility that the CP receives from advertisement. Thus, the utility of advertisement is dependent on the satisfaction of users. This is the reason that we use a similar function  to \eqref{equ:d_t+1} for the utility of advertisement\footnote{\editt{Note that we expect a diminishing return of ad utility based on quality, i.e. after a certain point increasing the quality would not significantly increase the utility of advertisement. Thus, we used a log function to model the ad utility from an end-user ($\log(\frac{b_t}{d_t})$). Thus, we assume the utility of advertisement to be $\sum_{d_t}\text{constant}\times\log(\frac{b_t}{d_t})=\text{constant}\times d_t\log(\frac{b_t}{d_t})$. If we consider a linear dependency between quality and ad revenue, then the utility would be $\sum_{d_t}\text{constant}\times \frac{b_t}{d_t}=\text{constant}\times b_t$. However, we believe that a function with diminishing return would model the ad utility more closely.  }}. The constant $\kappa_{CP}$ in general can be different from $\kappa_u$. The parameter $\alpha$ \editt{is a constant that models the the unit income of the CP for each end-user based on the quality that the end-user receives}: The higher $\alpha$, the higher the profit of the CP per rates sponsored. An example of  CPs with high $\alpha$ is shopping websites  (e.g. Amazon) that in contrast with streaming websites (e.g. Netflix) have a high profit per user rate.  

Thus, the utility of the CP at time $t$ if she chooses to join the sponsorship program is:
\small
\begin{equation}\label{equation:utilityCP}
\begin{aligned}
u_{CP,t}(b_t)&=u_{ad}(b_t)-p_t b_t
\end{aligned}
\end{equation}
\normalsize
To have a non-trivial problem, we assume $\kappa_{CP} \zeta>e=2.72$. 
 
The utility of the SP at time $t$ if she chooses to offer the sponsorship program is the revenue she makes by sponsoring the bits plus the users' satisfaction function:

\small
\begin{equation}\label{equ:SPutility}
u_{SP,t}(p_t) =  p_t b_t+ u_s(b_t(p_t))
\end{equation}
\normalsize 
where the users' satisfaction function, i.e. $u_s(.)$, is a function of the number of sponsored bits which subsequently depends on the price $p_t$. This function consists of two parts: (i) the satisfaction of users for access to the sponsored content and its quality, and (ii) the satisfaction of users when using non-sponsored content. This function could be decreasing or increasing depending on the users, the cell condition, etc. We define the satisfaction function as follows\footnote{\editt{Note that in the case of no sponsoring, the demand of the CP should be added to the demand of in the best effort categor, i.e.  $D$  in \eqref{equ:old_B} should be substituted by  $d_t+D$. However, we assume that $d_t<<D$, i.e. the demand for one content is much smaller than the aggregate demand for all other contents. This often arises in practice.  In Appendix~\ref{appendix:approx}, we show that this modification does not alter any results in essence, and the insights of the model would be the same as before.}}

\footnotesize
\begin{subnumcases}{u_s(b_t)=}
  \nu_1   d_{t} \log\l \frac{\kappa_{SP} b_t}{d_{t}}\r+\nu_2 D \log \l \kappa_{SP} \frac{N-b_t}{D}\r \label{equ:old_A} \\
   \qquad \qquad \qquad \qquad \qquad \qquad \qquad  \ \  \mbox{if }  d_t>0 \& \ b_t>0 \nonumber \\
 	\nu_2 D \log \l \kappa_{SP} \frac{N-b_t}{D}\r  \quad \mbox{Otherwise} \label{equ:old_B}
\end{subnumcases}
\normalsize
where $D$ is \editt{the total demand for all CPs other than the strategic CP}\footnote{\editt{We now argue why $D$ is considered to be constant. We consider the content of the CP that is willing to sponsor her content to be different from the content of other CPs, i.e. no competition over the content. An example of such CP is Youtube (for personal video streaming). This yields that the demands for the strategic CP (that can be sponsored) and other CPs are independent of each other. Thus, no demand will be switched from the content of the sponsored  CP  to other CPs. In addition, since we  assume that other CPs are not sensitive to the quality they deliver,  their demand is independent of the quality their end-users receive. Thus, $D$ can be considered as a constant, and independent of the demand for the sponsored content.}} and $\kappa_{SP}$ is a positive constant. \edit{In addition, $v_1$ and $v_2$ are constants corresponding to the weights that end-users assign to the sponsored and non-sponsored data, respectively. We considered the users' satisfaction function to be a part of the SP's utility since it is natural to think that SPs not only  care about the money they receive for the sponsored content, but also about the satisfaction (or the overall quality of experience) of end-users for both sponsored and non-sponsored content\footnote{\editt{Note that in reality, end-users can switch between SPs if they are not satisfied, and this incurs loss to the SP they leave. To capture this, we need to consider the competition between SPs which makes the analysis much more complicated. Instead, we consider only one SP and the user satisfaction function to be an element in the utility of the SP. This captures some aspects of competition over end-users between SPs, without complicating the analysis unnecessarily.}}. Another reason for considering the satisfaction of users  is the regulatory policies on the quality of the experience of users who use non-sponsored contents. Thus, $v_1$ and $v_2$ can be determined by the SP or the regulator\footnote{\editt{ Although in the current set up, the regulator does not provide quality constraints for the SP, one can envision that in a non-neutral framework, the regulator imposes  explicit or implicit  constraints on the behavior of SP toward the sponsored and non-sponsored data. In other words, in a non-neutral regime, it is natural to suppose that the regulator forces the SPs to take into the account the satisfaction of their users regardless of the fact that they are using sponsored or non-sponsored data.  Thus, the SP wants to maximize her utility (which depends on the money collected from the CPs) given some constraints. In this sense, including the satisfaction of users with parameters $v_1$ and $v_2$ is similar to the \emph{Lagrangian penalty (reward) function} by which  we solve the mentioned maximization.  Note that eventually $v_1$ and $v_2$ is set by the SP and not the regulator. However, their value depends on the restriction determined by the regulator. Therefore, a strict net-neutrality rule, mandates the SP to  assign high weights to the quality of the  content of non-sponsored data, i.e. high $v_2$.}}. The higher $\frac{v_2}{v_1}$, the higher would be the importance of the non-sponsored content.}
	 
Note that, despite the dependencies between $\kappa_u$, $\kappa_{CP}$, and $\kappa_{SP}$, these parameters could be potentially different. \editt{For example, the ad revenue is paid by an advertiser. This advertiser may value the quality of the content delivered to the end-users different from the end-users. Thus, for  this reason, $\kappa_{CP}$ might be different from $\kappa_u$. We  do not mandate the parameters to be different from each other and they can be potentially equal.}. 

Recall that we assume if $d_t=0$ or one of the decision makers exits the sponsoring program, then the game ends, and we have a stable outcome of no-sponsoring.

\begin{table}[t]
\centering
    \begin{tabular}{ | p{2cm} || p{6cm} |}
    \hline
    Symbol & Description \\ \hline \hline
    $p_t$  & the price per unit of resources sponsored at time $t$ \\ \hline 
    $b_t$ & the number of sponsored bits in an LTE frame at time $t$ \\ \hline
    $d_t$ & the demand between epoch $t$ and $t+1$\\ \hline
$\zeta$ & the minimum average quality desired by end-users \\ \hline    
    $\gamma$ & sensitivity of end users to the quality they receive.\\ \hline
    $\alpha$ &  \editt{constant, the unit income} \\ \hline
    $\kappa_u$, $\kappa_{CP}$, $\kappa_{SP}$ & constants  \\ \hline
    $\hat{N}$ & the number of available bits for sponsoring \\ \hline
    $N$ & the total number of bits (resources) in an LTE frame\\ \hline
    $u_s(.)$ &  end-users' satisfaction function \\ \hline
    $u_{ad}(.)$ & CP's advertisement profit \\ \hline 
    $\nu_1$ & the weight end-users assign to the sponsored data \\ \hline
    $\nu_2$ & the weight end-users assign to the non-sponsored data\\ \hline
    $D$ & the total demand of end-users for non-sponsored data\\ \hline
    $\frac{1}{\kappa_u}$ & the stable quality, the rate that stabilizes the demand \\ \hline
	$z\ \& y\ $ & the participation factor for the CP and SP, $1=$join, $0=$exit
	\\ \hline

          \end{tabular}
    \caption{Important Symbols}\label{table:symbols}\vspace{-4mm}
\end{table}

\subsection{Preliminary Notations and Definitions:}

In this section, we define some notations that we use throughout the paper. Section-specific notations and definitions are presented in the corresponding sections.

Note that, we model the problem of QSD as a sequential game if at least one of the decision makers is short-sighted, and as a bargaining game when both CPs and SPs are long-sighted.

\begin{definition}\emph{Short-Sighted (Myopic) Decision Maker:}
A decision maker is short-sighted if she maximizes the myopic payoff \edit{knowing  the present demand ($d_t$)},\footnote{\edit{Mathematical definitions for the optimization solved by the short-sighted SP and CP are presented in Equations \eqref{equ:max_SP} and \eqref{equ:optimization_CP_1}, respectively.}} i.e. does not involve the evolution of demand \edit{\eqref{equ:d_t+1}} in her decision making.
\end{definition}

\begin{definition}\emph{Long-Sighted Decision Maker:}
A decision maker is long-sighted if she maximizes her payoff in long-run considering the current demand and the evolution of the demand \edit{in \eqref{equ:d_t+1}}.\footnote{\edit{Mathematical definitions for the payoff of  the long-sighted SP and CP are presented in Equations \eqref{equ:long_SP_max} and \eqref{equ:long_CP_max}, respectively.}}
\end{definition}

Since we consider an evolving demand of end-users based on their satisfaction, one of the contributions of this paper is to characterize the stability conditions of the market.

\begin{definition} \emph{Stable Market:} We say that the market is stable if and only if the demand of end-users is asymptotically stable, i.e. if and only if:
$$
\lim_{t\rightarrow \infty}{|d_{t+1}-d_t|}=0
$$ 
\editt{ Note that it is not apriori clear that the demand would be stable. In fact, we see that in the short-sighted scenario, under certain parameters, the demand is unstable.} 
The definition of the stable market and \eqref{equ:d_t+1} yield the following lemma that is useful in determining the stable outcome of the market. 

\begin{lemma}\label{corollary:quality_24}
The market is stable if and only if the quality $\frac{b_t}{d_t} \xrightarrow[]{t\rightarrow \infty} \frac{1}{\kappa_u}$ is sponsored for end-users.
\end{lemma}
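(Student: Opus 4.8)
The plan is to work directly from the demand-update rule \eqref{equ:d_t+1}, restricting attention to the regime in which sponsoring continues, so that $d_t>0$ for every $t$ and the first branch of \eqref{equ:d_t+1} applies; the case $d_t=0$ is set aside by the standing convention that the game then terminates in the no-sponsoring outcome. On the positive branch, and as long as the clipping $(\cdot)^+$ is inactive, the update yields the exact identity
\begin{equation}
d_{t+1}-d_t=\gamma\, d_t \log\!\left( \kappa_u \frac{b_t}{d_t}\right),
\end{equation}
so that $|d_{t+1}-d_t|=\gamma\, d_t \left|\log\!\left(\kappa_u \frac{b_t}{d_t}\right)\right|$. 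The whole argument then reduces to controlling the product of the (possibly growing) factor $d_t$ and the log-deviation of the quality $\frac{b_t}{d_t}$ from its stable value $\frac{1}{\kappa_u}$.

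For the ``if'' direction I would suppose $\frac{b_t}{d_t}\to \frac{1}{\kappa_u}$ and first observe that this forces $d_t$ to stay bounded: since $b_t\le \hat N$ and $\frac{b_t}{d_t}\to \frac{1}{\kappa_u}>0$, for all large $t$ we have $\frac{b_t}{d_t}\ge \frac{1}{2\kappa_u}$, hence $d_t\le 2\kappa_u b_t\le 2\kappa_u \hat N$. The argument of the logarithm tends to $1$, so the log-deviation tends to $0$; multiplying the bounded factor $d_t$ by this vanishing term in the identity above gives $|d_{t+1}-d_t|\to 0$, which is exactly stability. I would also note that the clipping is inactive for large $t$, because $1+\gamma\log(\kappa_u b_t/d_t)\to 1>0$, so the identity is legitimately applicable.

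For the ``only if'' direction I would suppose the market is stable, i.e. $|d_{t+1}-d_t|\to 0$, equivalently $\gamma\, d_t \left|\log(\kappa_u b_t/d_t)\right|\to 0$. The key extra ingredient here is that in the ongoing-sponsoring regime the demand is bounded away from zero, $d_t\ge \delta>0$; dividing out this factor then gives $\left|\log(\kappa_u b_t/d_t)\right|\to 0$, i.e. $\frac{b_t}{d_t}\to \frac{1}{\kappa_u}$, as claimed.

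The step I expect to be most delicate is precisely the treatment of the multiplicative factor $d_t$, which couples the two quantities appearing in $|d_{t+1}-d_t|$: smallness of the demand differences does not by itself pin down the quality unless $d_t$ is bounded away from both $0$ and $\infty$. The upper bound comes for free from $b_t\le \hat N$, but the lower bound must be justified by ruling out the degenerate possibility $d_t\to 0$ (geometric decay of demand with quality persistently below $\frac{1}{\kappa_u}$). This is exactly the vanishing-demand situation that the model already sets aside as the no-sponsoring outcome, so I would invoke that convention to restrict legitimately to $d_t\ge \delta>0$. Finally, I would explicitly check that the clipping $(\cdot)^+$ in \eqref{equ:d_t+1} is inactive throughout the relevant regime, since an active clipping would set $d_{t+1}=0$ and return to the already-excluded terminal case.
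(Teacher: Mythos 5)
Your proof is correct and takes essentially the same route as the paper: the paper's entire proof is the assertion that the result ``follows immediately from'' the demand-update rule, and your argument is exactly that computation carried out carefully, splitting $|d_{t+1}-d_t|=\gamma\, d_t\,|\log(\kappa_u b_t/d_t)|$ and controlling the factor $d_t$ in each direction. Your explicit treatment of the degenerate vanishing-demand case (geometric decay under persistently low quality, where $|d_{t+1}-d_t|\to 0$ yet the quality does not converge to $\frac{1}{\kappa_u}$) is the one substantive addition; the paper silently folds that case into the no-sponsoring outcome, which is consistent with the convention you invoke and with how the lemma is later used.
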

\begin{proof}
The result follows immediately from \eqref{equ:d_t+1}.
\end{proof}
\begin{definition}\label{defintion:stablequality}
\emph{Stable Quality and Stable Demand}: We refer to $\frac{b}{d}=\frac{1}{\kappa_u}$ as the stable quality, and $d=\kappa_u b$ as the stable demand.
\end{definition}

\end{definition}

\section{Sequential Framework: SPNE Analysis}\label{section:analysisI}

In the sequential game framework,  we seek a \emph{Subgame Perfect Nash Equilibrium} (SPNE) using \emph{backward induction}.

\begin{definition}\emph{Subgame Perfect Nash Equilibrium (SPNE):}
	A strategy is an SPNE if and only if it constitutes a Nash Equilibrium (NE) of every subgame of the game. 
\end{definition} 

\begin{definition}\emph{Backward Induction:} Characterizing the equilibrium strategies starting from the last stage of the game and proceeding backward. 
\end{definition}

In this section, we first present the stages of the game (Section~\ref{subsection:stages}). Then, in Section~\ref{section:CPshortSPshort}, we consider the case in which both  the CP and the SP have a short-sighted (myopic) business model and play the one-shot game infinitely. We characterize the equilibrium strategies and asymptotic outcome of the game. When parameters of the market are such that a stable sponsoring outcome is not plausible, considering decision makers with long-sighted vision about the market may ensure a stable sponsoring outcome for the market. Thus, in Sections \ref{section:CPlongSP} and \ref{section:SPlongCP}, using the sequential framework, we investigate the cases in which either one of the SP and the CP is long-sighted and the other is short-sighted. \edit{In Section \ref{section:simulation_seq}, we present numerical results and discuss about them.}

\subsection{Stages of the Game:}\label{subsection:stages}
We suppose a complete information setting for the game. The timing of the game at time epoch $t$ is as follow:

\begin{enumerate}
	\item The SP decides on (1) offering the sponsorship program, $y_t\in\{0,1\}$ (with $y_t=1$ implying offering) and (2)  if $y_t=1$, on  the price per sponsored bit in an LTE frame, $p_t$, by solving the following optimization:
	
	\begin{equation}\label{equ:max_SP}
	\ba 
	&\max_{p_t} u_{SP,t}(p_t),
	\ea
	\end{equation}
	where $u_{SP,t}(p_t)$ is defined in~\eqref{equ:SPutility}. The SP sets $y_t=0$  if $u^*_{SP,t}<v_2 D \log(\kappa_{SP}\frac{N}{D})$ (the payoff is less than no-sponsoring payoff) or $d_t=0$, and $y_t=1$ otherwise, where $u^*_{SP,t}$ is the optimum outcome of the optimization\footnote{Note that we consider that in  the case of indifference $u^*_{SP,t}=0$, $y^*_t=1$}.
	
	\item The CP decides on (1) whether to participate in the sponsorship program,  $z_t\in\{0,1\}$ (with $z_t=1$ implying participation) and (2) if $z_t=1$ on the number of bits in an LTE frame (i.e. quality) she wants to sponsor, $b_t$, by solving the following optimization problem:

	\small
	\begin{equation}\label{equ:optimization_CP_1}
	\ba 
	&\max_{b_t> 0} u_{CP,t}(b_t)\\
	s.t. \quad
	&\qquad\frac{b_t}{d_{t}}\geq \zeta\\
	&\qquad b_t\leq \hat{N}
	\ea 
	\end{equation}
	\normalsize
	where $u_{CP,t}$ is defined in~\eqref{equation:utilityCP}. The first constraint is associated with the minimum quality that the CP wants to deliver to her end-users. The second constraint puts an upperbound to the number of bits that a CP can sponsor in an LTE frame. The CP sets $z_t=0$ if $u^*_{CP,t}<0$ or $d_t=0$, and $z_t=1$ otherwise, where $u^*_{CP,t}$ is the optimum outcome of the optimization\footnote{Note that we consider that in  the case of indifference $u^*_{CP,t}=0$, $z^*_t=1$}. In addition, the CP exits the sponsorship program,  i.e $z_t=0$, if there is no feasible solution for \eqref{equ:optimization_CP_1}. Note that, $
	d_t =
	\l 1+\gamma \log\l\kappa_u \frac{b_{t-1}}{d_{t-1}}\r \r^+ 
	$, and is known as the history of the game is known.

\end{enumerate}

We use the Backward Induction method to find the Sub game Perfect Nash Equilibrium (SPNE) of the game. Thus, first, we find the best response strategy of the CP in the second stage given the strategy of the SP in the first stage and the history of the game. This allows the CP to decide on (1) joining the sponsorship program and also on (2)  the number of bits to sponsor. Then, using this best response strategy and the history, the SP chooses (1) whether to launch the sponsorship program or not, and (2) the optimum per-bit price, $p_t$, in the first stage.

\subsection{Short-Sighted CP, Short-Sighted SP}\label{section:CPshortSPshort}

\textbf{CP's Strategy:}
In the second stage, knowing the decision of the SP at stage one, the CP solves \eqref{equ:optimization_CP_1} at each time-epoch $t$. 

\begin{theorem}\label{theorem:CP_myopic}\textbf{Equilibrium Strategy of Stage 2:}
	The strategy of the CP in the SPNE is as follows:
	\small
	\be \label{equ:opt_CP_1_1}
	\ba 
	\mbox{if } & 0<d_t \leq \frac{\hat{N}}{\zeta},\\
	&(b^*_t,z^*_t) =
	\left\{
	\begin{array}{ll}
		(\hat{N},1)  & \mbox{if }\quad p_t \leq \frac{\alpha d_t}{\hat{N}} \\
		(\frac{\alpha d_t}{p_t},1) &\mbox{if }\quad \frac{\alpha d_t}{\hat{N}}\leq p_t\leq \frac{\alpha}{\zeta}\\
		(\zeta d_t,1)  & \mbox{if }\quad  \frac{\alpha}{\zeta}\leq p_t \leq \frac{\alpha \log(\kappa_{CP} \zeta)}{\zeta} \\
		
		(-,0) &\mbox{if }\quad  p_t > \frac{\alpha \log(\kappa_{CP} \zeta)}{\zeta}  
	\end{array}
	\right.
	\ea
	\ee
	\normalsize
	
	\small
	\be \label{equ:opt_CP_1_2}
	\mbox{if } d_t > \frac{\hat{N}}{\zeta} \text{ or } d_t=0,\quad (b^*_t,z^*_t) =(-,0)
	\ee
	\normalsize
\end{theorem}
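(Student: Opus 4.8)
The plan is to solve the constrained maximization \eqref{equ:optimization_CP_1} by first reducing it to maximizing a smooth, strictly concave objective over a fixed interval, and then appending the participation (exit) test separately. First I would note that for $d_t>0$ the two constraints $\frac{b_t}{d_t}\geq \zeta$ and $b_t\leq \hat{N}$ confine $b_t$ to the interval $[\zeta d_t,\hat{N}]$, which is nonempty precisely when $\zeta d_t\leq \hat{N}$, i.e. $d_t\leq \frac{\hat{N}}{\zeta}$. This immediately disposes of \eqref{equ:opt_CP_1_2}: whenever $d_t>\frac{\hat{N}}{\zeta}$ the feasible set is empty so the CP must exit, and whenever $d_t=0$ the game has already ended by assumption; in both situations $(b_t^*,z_t^*)=(-,0)$.

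For the main case $0<d_t\leq \frac{\hat{N}}{\zeta}$, I would work directly with $u_{CP,t}(b_t)=\alpha d_t\log\l\frac{\kappa_{CP} b_t}{d_t}\r-p_t b_t$. Differentiating gives $u_{CP,t}'(b_t)=\frac{\alpha d_t}{b_t}-p_t$ and $u_{CP,t}''(b_t)=-\frac{\alpha d_t}{b_t^2}<0$, so the objective is strictly concave with a unique unconstrained maximizer $\tilde{b}_t=\frac{\alpha d_t}{p_t}$. The constrained maximizer over $[\zeta d_t,\hat{N}]$ is then obtained by projecting $\tilde{b}_t$ onto this interval: it equals $\hat{N}$ when $\tilde{b}_t\geq \hat{N}$ (equivalently $p_t\leq \frac{\alpha d_t}{\hat{N}}$), the interior point $\frac{\alpha d_t}{p_t}$ when $\zeta d_t\leq \tilde{b}_t\leq \hat{N}$ (equivalently $\frac{\alpha d_t}{\hat{N}}\leq p_t\leq \frac{\alpha}{\zeta}$), and the left endpoint $\zeta d_t$ when $\tilde{b}_t<\zeta d_t$ (equivalently $p_t>\frac{\alpha}{\zeta}$). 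This reproduces the three bit-levels in \eqref{equ:opt_CP_1_1}; the $d_t$ factor cancels cleanly when converting the $b$-thresholds into the $p_t$-thresholds, and the candidate solutions agree at both interval boundaries, confirming continuity across the regimes.

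The remaining work is the participation rule $z_t^*=1 \Leftrightarrow u_{CP,t}^*\geq 0$, evaluated at the optimal bit-level in each regime. On the interior branch $b_t^*=\frac{\alpha d_t}{p_t}$ the optimal value is $u_{CP,t}^*=\alpha d_t\l\log\l\frac{\kappa_{CP}\alpha}{p_t}\r-1\r$, which is nonnegative exactly when $p_t\leq \frac{\kappa_{CP}\alpha}{e}$; since this branch requires $p_t\leq \frac{\alpha}{\zeta}$, the standing assumption $\kappa_{CP}\zeta>e$ guarantees $\frac{\alpha}{\zeta}\leq \frac{\kappa_{CP}\alpha}{e}$, so participation always holds here. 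Because the value $u_{CP,t}^*(p_t)$ is nonincreasing in $p_t$ (it is the maximum, over a fixed feasible set, of functions decreasing in $p_t$), participation then follows automatically on the lower-price branch $b_t^*=\hat{N}$ as well. On the minimum-quality branch $b_t^*=\zeta d_t$ one computes $u_{CP,t}^*=d_t\l\alpha\log(\kappa_{CP}\zeta)-p_t\zeta\r$, nonnegative iff $p_t\leq \frac{\alpha\log(\kappa_{CP}\zeta)}{\zeta}$, which splits the regime $p_t>\frac{\alpha}{\zeta}$ into the participation interval and the exit interval of \eqref{equ:opt_CP_1_1}.

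I expect the main obstacle to be precisely this last step rather than the calculus. One must invoke $\kappa_{CP}\zeta>e$ at two points: to certify participation on the first two branches (via $\frac{\alpha}{\zeta}\leq \frac{\kappa_{CP}\alpha}{e}$), and to ensure $\frac{\alpha\log(\kappa_{CP}\zeta)}{\zeta}>\frac{\alpha}{\zeta}$ so that the minimum-quality participation regime is genuinely nonempty and sits strictly above the interior regime's upper price. Verifying that the value formulas coincide at $p_t=\frac{\alpha}{\zeta}$, where both give $\alpha d_t(\log(\kappa_{CP}\zeta)-1)>0$, ties the branches together and confirms the case boundaries in the statement are exactly where the optimal structure changes.
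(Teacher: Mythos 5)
Your proposal is correct and follows essentially the same route as the paper's proof: exploit strict concavity of $u_{CP,t}$, obtain the unconstrained maximizer $\frac{\alpha d_t}{p_t}$ from the first-order condition, project it onto the feasible interval $[\zeta d_t,\hat{N}]$ to get the three price regimes, and then apply the participation test in each regime using the standing assumption $\kappa_{CP}\zeta>e$. Your envelope-monotonicity argument for participation on the branch $p_t\leq \frac{\alpha d_t}{\hat{N}}$ and the explicit continuity check at $p_t=\frac{\alpha}{\zeta}$ are slightly more polished than the paper's direct positivity assertion, but they do not constitute a different method.
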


\begin{figure}
	\centering
	\includegraphics[width=0.5\textwidth]{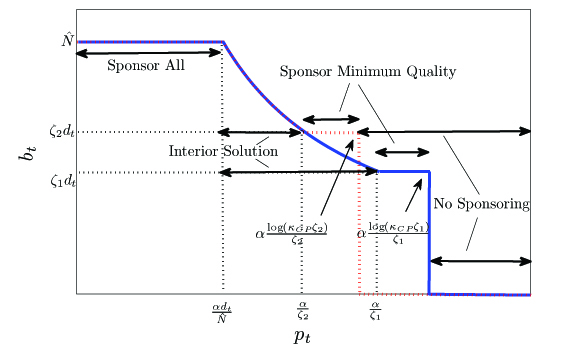}
	\caption{The optimum strategy of the CP presented in theorem~\ref{theorem:CP_myopic} for $\zeta_1$ (blue) and $\zeta_2$ (red) when $0<d_t\leq \frac{\hat{N}}{\zeta}$ and $\zeta_2>\zeta_1$.}\vspace{-4mm}
	\label{figure:Theorem1}
	
\end{figure}

\begin{remark} It is intuitive that the number of sponsored bits is a decreasing function of the price per sponsored bit. In addition, one can expect that if the price per sponsored bit  is lower (respectively, higher) than a threshold, the CP sponsors all the available bits (respectively, the amount to satisfy only the minimum quality requested by the end-users). Moreover, if the price is so high that in case of sponsoring the CP receives a negative payoff, the CP would exit the sponsoring program. This also characterize another threshold for the price per sponsored bit. In theorem~\ref{theorem:CP_myopic}, we confirm the intuitions, and go beyond it by characterizing the thresholds on the price per sponsored bit and optimum number of sponsored bits in different regions characterized by the thresholds. Figure~\ref{figure:Theorem1} illustrates the optimum strategy of the CP and the regions described in the theorem for two different values of $\zeta$. Note that the higher the minimum quality requested by end-users, the lower the thresholds on $p_t$ after which the CP sponsors only the minimum quality or exits the sponsorship program. 
	
	In order to prove the theorem, we apply the first order optimality condition since the utility of the CP is concave. \edit{The proof is presented in the Appendix.}
\end{remark}

\textbf{SP's Strategy:} Now, having the optimum strategy of the CP in stage 2, we can find the optimum strategy for the SP:

\begin{theorem}\label{Theorem:SP}\textbf{Equilibrium Strategy of Stage 1:} 
	The optimum strategies of the SP are:
	\footnotesize
	\be \label{equation:p_eq}
	\ba
	&\text{if }\quad  0<d_t\leq \frac{\hat{N}}{\zeta}, \\
	& (p^*_t,y^*_t) =
	\left\{
	\begin{array}{ll}
		\l \text{argmax}\{u_{SP,t}\l p_t \r: p_t\in P^*\},1\r  & 
		\mbox{if }  u_{SP,t}\l p^*_t\r \geq u_{SP,0} \\
		\l -,0 \r 	&\mbox{if }  u_{SP,t}\l p^*_t\r < u_{SP,0} 
	\end{array}
	\right.\\
	&\text{if }\quad  d_t> \frac{\hat{N}}{\zeta}\text{ or }d_t=0, \quad (p^*_t,y^*_t) =(-,0)
	\ea 
	\ee
	\normalsize
	where $P^*=\{\frac{\alpha d_t}{\hat{N}},\frac{\alpha \log \l \kappa_{CP}\zeta \r}{\zeta},\alpha \frac{\nu_1d_t+\nu_2 D}{\nu_1 N}\}$ is the set of candidate optimum pricing strategies, and $u_{SP,0}$ is considered to be the utility of the SP in case of no-sponsoring, i.e. $v_2 D \log(\kappa_{SP} \frac{N}{D})$. In addition, the necessary condition for the candidate stable point $\alpha \frac{\nu_1d_t+\nu_2 D}{\nu_1 N}$ to be an optimum is $\frac{\alpha d_t}{\hat{N}}\leq \alpha \frac{\nu_1d_t+\nu_2 D}{\nu_1 N}\leq \frac{\alpha}{\zeta}$. Note that the variable $y_t$ determines whether the SP offers the sponsorship program or not, with $y_t=1$ implying the offering.
	
\end{theorem}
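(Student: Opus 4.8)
The plan is to substitute the CP's best response $b^*_t(p_t)$ from Theorem~\ref{theorem:CP_myopic} into the SP's objective $u_{SP,t}(p_t)=p_t b_t+u_s(b_t)$ and then optimize over $p_t$ region by region. For $0<d_t\le \hat{N}/\zeta$, equation~\eqref{equ:opt_CP_1_1} partitions the price axis into three participation regions plus an exit region; in the exit region the CP does not sponsor, so the SP collects exactly the no-sponsoring payoff $u_{SP,0}=\nu_2 D\log(\kappa_{SP}N/D)$. Inside each participation region I would write $u_{SP,t}$ explicitly, locate its maximizer, and then stitch the three local maximizers together using continuity of $u_{SP,t}$ at the region boundaries (where $b_t$ is continuous in $p_t$). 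The global sponsoring optimum is then compared against $u_{SP,0}$ to set $y^*_t$.

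First, in the region $p_t\le \alpha d_t/\hat{N}$ we have $b_t=\hat{N}$ constant, so $u_{SP,t}=p_t\hat{N}+u_s(\hat{N})$ is strictly increasing in $p_t$ (slope $\hat{N}>0$) and its supremum over the region is attained at the right endpoint $p_t=\alpha d_t/\hat{N}$, the first element of $P^*$. Symmetrically, in the region $\alpha/\zeta\le p_t\le \alpha\log(\kappa_{CP}\zeta)/\zeta$ we have $b_t=\zeta d_t$ constant, so $u_{SP,t}=p_t\zeta d_t+u_s(\zeta d_t)$ is again strictly increasing, with maximizer at the upper endpoint $p_t=\alpha\log(\kappa_{CP}\zeta)/\zeta$, the second element of $P^*$.

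The crux is the middle region $\alpha d_t/\hat{N}\le p_t\le\alpha/\zeta$, where $b_t=\alpha d_t/p_t$. The key observation that makes this tractable is that the monetary term collapses to a constant, $p_t b_t=\alpha d_t$, so maximizing $u_{SP,t}$ there is equivalent to maximizing $u_s(b_t)$ alone. Since $b_t=\alpha d_t/p_t$ is strictly decreasing in $p_t$, I would change variables to $b\in[\zeta d_t,\hat{N}]$ and maximize $u_s(b)=\nu_1 d_t\log(\kappa_{SP}b/d_t)+\nu_2 D\log(\kappa_{SP}(N-b)/D)$. This $u_s$ is strictly concave (second derivative $-\nu_1 d_t/b^2-\nu_2 D/(N-b)^2<0$), and its unique unconstrained stationary point solves $\nu_1 d_t/b=\nu_2 D/(N-b)$, giving $b^*=\nu_1 d_t N/(\nu_1 d_t+\nu_2 D)$ and hence the price $\alpha(\nu_1 d_t+\nu_2 D)/(\nu_1 N)$, the third element of $P^*$. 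Concavity implies this interior candidate is the maximizer over the middle region precisely when it is feasible, i.e. when $\alpha d_t/\hat{N}\le \alpha(\nu_1 d_t+\nu_2 D)/(\nu_1 N)\le \alpha/\zeta$, which is exactly the necessary condition stated in the theorem; otherwise the middle-region optimum sits at one of its endpoints.

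To conclude that the global optimum lies in the finite set $P^*$, I would invoke continuity of $u_{SP,t}$ across the boundaries $p_t=\alpha d_t/\hat{N}$ and $p_t=\alpha/\zeta$. The only endpoint of the middle region not already listed in $P^*$ is $p_t=\alpha/\zeta$, but this coincides with the left endpoint of the strictly increasing region $C$, so its value is dominated by the value at $\alpha\log(\kappa_{CP}\zeta)/\zeta$. Hence the supremum over all sponsoring prices equals $\max\{u_{SP,t}(p):p\in P^*\}$, and the SP plays the argmax with $y^*_t=1$ iff this exceeds $u_{SP,0}$. Finally, for $d_t>\hat{N}/\zeta$ or $d_t=0$, equation~\eqref{equ:opt_CP_1_2} says the CP exits for every price, so the SP can do no better than $u_{SP,0}$ and sets $y^*_t=0$. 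I expect the main obstacle to be the bookkeeping of the continuity/endpoint argument that reduces the entire price axis to $P^*$ — in particular justifying that the non-listed boundary $\alpha/\zeta$ never beats the listed candidates — rather than any single optimization, each of which becomes elementary once the constant-revenue identity $p_t b_t=\alpha d_t$ in the middle region is spotted.
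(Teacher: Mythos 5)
Your proposal is correct and follows the same overall architecture as the paper's proof: substitute the CP's best response from Theorem~\ref{theorem:CP_myopic} into $u_{SP,t}$, observe strict monotonicity (slope $\hat{N}$, resp.\ $\zeta d_t$) in the two regions where $b^*_t$ is constant so that their right endpoints $\frac{\alpha d_t}{\hat{N}}$ and $\frac{\alpha\log(\kappa_{CP}\zeta)}{\zeta}$ dominate, analyze the middle region separately, and finally compare against $u_{SP,0}$ to set $y^*_t$; the degenerate cases $d_t>\hat{N}/\zeta$ and $d_t=0$ are dispatched identically. Where you genuinely depart from the paper is the middle region $\frac{\alpha d_t}{\hat{N}}\le p_t\le\frac{\alpha}{\zeta}$: the paper applies the first-order condition directly in the variable $p_t$ and explicitly concedes that the second derivative of $u_{SP,t}(p_t)$ "can be negative or positive," so it can only register $\alpha\frac{\nu_1 d_t+\nu_2 D}{\nu_1 N}$ as a stationary \emph{candidate}. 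You instead exploit the constant-revenue identity $p_t b^*_t=\alpha d_t$ to change variables to $b\in[\zeta d_t,\hat{N}]$, where the objective reduces to $\alpha d_t+u_s(b)$ with $u_s$ strictly concave; since $p\mapsto \alpha d_t/p$ is a monotone bijection between the two intervals, this buys you strictly more than the paper proves, namely that the interior candidate, when it satisfies the feasibility bracket, is the \emph{actual} maximizer of the middle region rather than merely a stationary point, and that otherwise the middle-region optimum sits at an endpoint already dominated by the listed candidates (the endpoint $\alpha/\zeta$ being dominated via continuity of $b^*_t$ there and the strict increase on the third region). Both arguments establish the theorem as stated, since the theorem only claims the feasibility bracket as a necessary condition, but your change-of-variables argument is the cleaner route and resolves the second-order ambiguity the paper leaves open.
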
 
\begin{remark} The immediate plausible range for the price per sponsored bit that one can think of is the interval between the 
	lowest price that makes the CP to sponsor all the available bits and the highest price that makes the CP to sponsor only to satisfy the minimum desired quality. In Theorem~\ref{Theorem:SP}, first, we narrow down this interval to prices between the highest price that makes the CP to sponsor all the available bits and the highest price that makes the CP to sponsor only to satisfy the minimum desired quality. Then, we characterize the interior optimum price. This choice is conditional on getting a payoff greater than or equal to the utility of the SP in case of no-sponsoring. Otherwise, the SP exits the sponsorship program.   
	
	In order to prove the theorem, we use the monotonic behavior of the utility of the SP in some regions, and apply the first order optimality condition for the remaining regions. \edit{The proof is presented in the Appendix.}
\end{remark}

\begin{corollary}\label{remark:indifferent}
	Choosing the price $\frac{\alpha \log(\kappa_{CP} \zeta)}{\zeta}$ by the SP, i.e. the highest price by which the CP sponsors only to guarantee the minimum quality, renders the utility of the CP to be zero, and the CP to be indifferent between joining or not joining the sponsorship program.
\end{corollary}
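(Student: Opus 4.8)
The plan is to exploit the explicit form of the CP's equilibrium best response already established in Theorem~\ref{theorem:CP_myopic}. The first step I would take is to locate the price $p_t = \frac{\alpha \log(\kappa_{CP} \zeta)}{\zeta}$ within the case analysis of \eqref{equ:opt_CP_1_1}: it is precisely the right endpoint of the third region $\frac{\alpha}{\zeta} \leq p_t \leq \frac{\alpha \log(\kappa_{CP}\zeta)}{\zeta}$, on which the CP's optimal response is $(b^*_t, z^*_t) = (\zeta d_t, 1)$. In other words, at this price the CP sponsors exactly enough resources to meet the quality constraint with equality, $\frac{b^*_t}{d_t} = \zeta$. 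I would also note in passing that the standing assumption $\kappa_{CP}\zeta > e$ guarantees $\log(\kappa_{CP}\zeta) > 1$, so this boundary price is strictly positive and the minimum-quality region is nonempty and well-defined.

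The second step is a direct substitution into the CP's utility $u_{CP,t}(b_t) = u_{ad}(b_t) - p_t b_t$ from \eqref{equation:utilityCP}. Setting $b_t = \zeta d_t$ (and working in the case $d_t > 0$), the advertisement term collapses because the ratio $b_t/d_t$ equals $\zeta$, yielding $u_{ad}(\zeta d_t) = \alpha d_t \log(\kappa_{CP}\zeta)$. The payment term evaluates to $\frac{\alpha \log(\kappa_{CP}\zeta)}{\zeta} \cdot \zeta d_t = \alpha d_t \log(\kappa_{CP}\zeta)$, which is identical to the advertisement term. The two cancel exactly, so $u^*_{CP,t} = u_{CP,t}(\zeta d_t) = 0$.

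The final step is to connect this value to the participation decision. Recall the stage-2 rule that the CP sets $z_t = 1$ whenever $u^*_{CP,t} \geq 0$, with the stated tie-breaking convention $z^*_t = 1$ in the indifference case $u^*_{CP,t} = 0$. Since the computation produces exactly $u^*_{CP,t} = 0$, the CP's payoff from joining coincides with her payoff from abstaining, which is the claimed indifference. There is essentially no obstacle in this argument: it reduces to a single-line cancellation once the correct branch of Theorem~\ref{theorem:CP_myopic} is identified. The only point demanding care is verifying that the chosen price sits on the boundary of the minimum-quality region (so that $b^*_t = \zeta d_t$) rather than spilling into the exit region $p_t > \frac{\alpha \log(\kappa_{CP}\zeta)}{\zeta}$, which is why pinning down the endpoint in the first step matters.
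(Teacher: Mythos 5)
Your proposal is correct and follows essentially the same route as the paper's own proof: identify $b^*_t=\zeta d_t$ from the relevant branch of Theorem~\ref{theorem:CP_myopic} at this price (with $d_t>0$, as guaranteed when the SP offers this price per Theorem~\ref{Theorem:SP}), then substitute into \eqref{equation:utilityCP} to see the advertisement revenue and the payment cancel exactly. Your explicit cancellation computation and the check that $\kappa_{CP}\zeta>e$ makes the minimum-quality region nonempty simply spell out what the paper leaves implicit.
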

\begin{proof}
	Results follow from \eqref{equation:utilityCP}, and  that from Theorem~\ref{Theorem:SP}, when $p_t=\frac{\alpha \log(\kappa_{CP} \zeta)}{\zeta}$, $d>0$ and $b_t=\zeta d_t$ (from Theorem~\ref{theorem:CP_myopic}).
\end{proof}

\textbf{Outcome of the Game:} Now that we have characterized the SPNE at each time-epoch for a short-sighted CP and SP, the next step is to analyze the asymptotic behaviour of the market given the demand update function \eqref{equ:d_t+1} and considering the one-shot game to be repeated infinitely. The goal is to characterize the asymptotically stable 5-tuple equilibrium outcome of the game, i.e. $\l d,y,p,z,b \r$ (table~\ref{table:symbols}), if it were to exist. In the next Theorem, all possible asymptotically stable outcomes are listed. However, the existence of such a stable outcome is not guaranteed, and the market can be unstable in some cases. 


\begin{theorem}\label{theorem:stable}
	The possible asymptotically stable  outcomes of the game are:
	\begin{enumerate}
		\item $\l -,0,-,0,-\r$: \emph{no sponsoring} is offered, none taken.
		\item $\l \kappa_u \hat{N}, 1, \alpha \kappa_u, 1, \hat{N}\r$: \emph{the maximum bit sponsoring};  if this is the stable outcome then $\kappa_u\leq \frac{1}{\zeta}$.
		\item $\l d,1,\frac{\alpha \log\l \kappa_{CP}\zeta\r}{\zeta},1,\zeta d\r$: \emph{the minimum quality sponsoring}; if this is the stable outcome then $\kappa_u=\frac{1}{\zeta}$ and $0<d\leq \frac{\hat{N}}{\zeta}$.
		
		\item  $\l N \kappa_u - \frac{\nu_2}{\nu_1}D, 1, \alpha \kappa_u,1, N - \frac{\nu_2}{\kappa_u \nu_1}D\r$: \emph{the interior stable points}; if this is the stable outcome, then $\kappa_u\leq \frac{1}{\zeta}$ and $0<b=N - \frac{\nu_2}{\kappa_u \nu_1}D\leq \hat{N}$.
	\end{enumerate}
\end{theorem}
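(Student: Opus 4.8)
The plan is to combine the single-epoch equilibrium characterizations (Theorems~\ref{theorem:CP_myopic} and~\ref{Theorem:SP}) with the stability criterion of Lemma~\ref{corollary:quality_24}, treating an asymptotically stable outcome as a fixed point of the repeated one-shot game that simultaneously satisfies the quality condition $\frac{b}{d}=\frac{1}{\kappa_u}$. At any stable sponsoring outcome the demand no longer moves, so by Lemma~\ref{corollary:quality_24} we must have $d=\kappa_u b$; moreover the price/quantity pair $(p,b)$ must itself be the stage-game SPNE evaluated at that steady-state demand $d$. If instead either player exits or $d\to 0$, the game terminates in the no-sponsoring state, which yields outcome~1 immediately (once reached it is absorbing). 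Thus the work reduces to intersecting the steady-state quality line $d=\kappa_u b$ with each equilibrium branch.

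First I would enumerate the SP's equilibrium price, which by Theorem~\ref{Theorem:SP} is one of the three candidates in $P^*$ (or exit). For each candidate I substitute the CP's best response from Theorem~\ref{theorem:CP_myopic} and then impose $d=\kappa_u b$: (i) if the SP charges $p=\frac{\alpha d}{\hat N}$ the CP sponsors $b=\hat N$, so stability forces $d=\kappa_u\hat N$ and hence $p=\alpha\kappa_u$, giving the maximum-bit outcome~2; (ii) if the SP charges $p=\frac{\alpha\log(\kappa_{CP}\zeta)}{\zeta}$ the CP sponsors the minimum $b=\zeta d$, so the delivered quality is exactly $\zeta$, and $\frac{b}{d}=\frac{1}{\kappa_u}$ can hold only in the knife-edge case $\kappa_u=\frac{1}{\zeta}$, giving the minimum-quality outcome~3; (iii) if the SP charges the interior price $p=\alpha\frac{\nu_1 d+\nu_2 D}{\nu_1 N}$ the CP sponsors $b=\frac{\alpha d}{p}$, and the stability identity $d=\kappa_u b=\kappa_u\frac{\alpha d}{p}$ pins down $p=\alpha\kappa_u$; substituting this back into the interior-price equation solves for $d=N\kappa_u-\frac{\nu_2}{\nu_1}D$ and then $b=\frac{d}{\kappa_u}=N-\frac{\nu_2}{\kappa_u\nu_1}D$, giving the interior outcome~4. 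Since $P^*$ has exactly three elements plus the exit option, these four cases are exhaustive.

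The remaining step is to read off, for each branch, the feasibility conditions imposed by the region definitions. The CP minimum-quality constraint $\frac{b}{d}\ge\zeta$ evaluated at the stable quality $\frac{1}{\kappa_u}$ yields $\kappa_u\le\frac{1}{\zeta}$ for outcomes~2 and~4 (with the equality $\kappa_u=\frac{1}{\zeta}$ already forced in outcome~3); the bounds $0<d_t\le\frac{\hat N}{\zeta}$ and $b\le\hat N$ from Theorem~\ref{theorem:CP_myopic} translate into $0<d\le\frac{\hat N}{\zeta}$ for outcome~3 and $0<b=N-\frac{\nu_2}{\kappa_u\nu_1}D\le\hat N$ for outcome~4. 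I would also confirm that each candidate price lands in the price band of the CP region used for its best response, in particular that the interior price satisfies $\frac{\alpha d}{\hat N}\le p\le\frac{\alpha}{\zeta}$, which is exactly the necessary condition recorded in Theorem~\ref{Theorem:SP}.

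The delicate part is not the fixed-point algebra but the self-consistency of the coupled map: the SP's candidate prices $\frac{\alpha d}{\hat N}$ and $\alpha\frac{\nu_1 d+\nu_2 D}{\nu_1 N}$ themselves depend on the running demand, so a genuine steady state must reproduce $d$, $p$, and $b$ simultaneously, which is why the identity $d=\kappa_u b$ is essential for closing the interior case. I expect the main obstacle to be making this closure rigorous while checking that, at each claimed tuple, the CP's best response falls in the intended region (rather than on the adjacent branch) so that $(p,b)$ is indeed the SPNE at the steady-state demand. Finally, I would emphasize that this theorem asserts only that the listed tuples are the \emph{possible} stable outcomes together with necessary conditions for each; the separate question of whether a given tuple is actually attracting, i.e. whether the induced demand map contracts toward it, is a dynamical-systems analysis that I would defer to the characterization of when the market converges versus diverges.
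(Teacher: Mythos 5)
Your proposal is correct and follows essentially the same route as the paper's proof: the no-sponsoring outcome is handled as an absorbing state, and each of the three candidate prices in $P^*$ from Theorem~\ref{Theorem:SP} is combined with the CP's best response from Theorem~\ref{theorem:CP_myopic} and the stability identity $d=\kappa_u b$ from Lemma~\ref{corollary:quality_24} to pin down the tuples and their feasibility conditions ($\kappa_u\le\frac{1}{\zeta}$, $0<d\le\frac{\hat N}{\zeta}$, $0<b\le\hat N$, and the price-band check for the interior case). The only cosmetic difference is that you obtain $\kappa_u\le\frac{1}{\zeta}$ from the minimum-quality constraint evaluated at the stable quality, while the paper reads it off the demand-feasibility and price-band bounds; these are equivalent.
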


\begin{remark}
	Since the CP is shortsighted, in every stable outcome of the game, the strategy of the CP would be a myopic optimum strategy. Thus, using Theorem~\ref{theorem:CP_myopic}, one can expect the strategy of the CP to take one of the four possibilities in a stable outcome: (1) no sponsoring (2) sponsoring the maximum amount available (3) sponsoring only to satisfy the minimum required quality, or (4) sponsoring an optimum interior amount of bits. Subsequently, depending on the strategy of the CP, Theorem~\ref{Theorem:SP} characterizes the stable strategy of the SP. In order to prove the theorem we use Lemma~\ref{corollary:quality_24}. \edit{The proof is presented in the Appendix.}
\end{remark}


\begin{corollary}\label{corollary:overprovision}
	There is no stable outcome involving sponsoring for the game if the stable quality is smaller than the minimum quality set by the CP, i.e. $\frac{1}{\kappa_u}<\zeta$. 
\end{corollary}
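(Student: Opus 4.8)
The plan is to obtain the corollary as an almost immediate consequence of Lemma~\ref{corollary:quality_24} together with the minimum-quality feasibility constraint in the CP's optimization \eqref{equ:optimization_CP_1}. The key observation is that any stable outcome that involves sponsoring must reconcile two competing requirements on the sponsored quality $\frac{b}{d}$: stability pins it to the stable quality $\frac{1}{\kappa_u}$ (equivalently, $d=\kappa_u b$ by Definition~\ref{defintion:stablequality}), whereas feasibility for a participating CP forces $\frac{b}{d}\geq \zeta$. When $\frac{1}{\kappa_u}<\zeta$, these two requirements cannot hold together, so no such outcome can exist.

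Concretely, I would argue by contradiction. Suppose a stable outcome involving sponsoring exists; then $z=1$ and $b>0$ at this outcome, and along the converging trajectory $z_t=1$ for all large $t$, so the CP's constraint $\frac{b_t}{d_t}\geq \zeta$ from \eqref{equ:optimization_CP_1} is active on the entire tail. By Lemma~\ref{corollary:quality_24}, stability of the market is equivalent to $\frac{b_t}{d_t}\to \frac{1}{\kappa_u}$ as $t\to\infty$. Passing to the limit in the (weak) inequality $\frac{b_t}{d_t}\geq \zeta$ yields $\frac{1}{\kappa_u}\geq \zeta$, contradicting the hypothesis $\frac{1}{\kappa_u}<\zeta$. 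Hence every stable sponsoring outcome is ruled out, and the only stable outcome that remains is the no-sponsoring one, i.e. item~1 of Theorem~\ref{theorem:stable}.

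As a cross-check, and an independent route, I would simply inspect the three sponsoring outcomes listed in Theorem~\ref{theorem:stable}: the maximum-bit outcome (item~2) and the interior outcome (item~4) each carry the side condition $\kappa_u\leq \frac{1}{\zeta}$, i.e. $\frac{1}{\kappa_u}\geq \zeta$, while the minimum-quality outcome (item~3) requires the sharper $\kappa_u=\frac{1}{\zeta}$, i.e. $\frac{1}{\kappa_u}=\zeta$. Under the strict hypothesis $\frac{1}{\kappa_u}<\zeta$, all three side conditions fail simultaneously, so no sponsoring outcome survives. This matches the conclusion of the limiting argument.

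There is no genuinely hard step: the content of the corollary is already encoded in Theorem~\ref{theorem:stable} and Lemma~\ref{corollary:quality_24}. The only point needing a little care is the boundary case — because the hypothesis is the strict inequality $\frac{1}{\kappa_u}<\zeta$, even the minimum-quality outcome (which needs $\frac{1}{\kappa_u}=\zeta$ exactly) is excluded, so ``smaller'' in the statement must be read as strict. I would therefore make sure the limit passage preserves the inequality correctly and that the tail condition $z_t=1$ is invoked to guarantee the CP's quality constraint is active throughout the steady state.
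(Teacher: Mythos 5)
Your proposal is correct and matches the paper's treatment: the paper states this corollary without a separate proof, precisely because it is an immediate consequence of the side conditions in Theorem~\ref{theorem:stable} (items 2 and 4 require $\kappa_u\leq\frac{1}{\zeta}$, item 3 requires $\kappa_u=\frac{1}{\zeta}$, all of which fail when $\frac{1}{\kappa_u}<\zeta$), which is exactly your cross-check route. Your primary contradiction argument — CP feasibility $\frac{b_t}{d_t}\geq\zeta$ on the participating tail, Lemma~\ref{corollary:quality_24} forcing $\frac{b_t}{d_t}\to\frac{1}{\kappa_u}$, and passage to the limit giving $\frac{1}{\kappa_u}\geq\zeta$ — is also sound, and is essentially a condensed replay of how those side conditions were derived inside the proof of Theorem~\ref{theorem:stable}, with the minor advantage of not relying on the exhaustive enumeration of equilibria.
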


\begin{remark}
	Unlike other plausible stable outcomes, the third possible stable point, i.e. $\l d,1,\frac{\alpha \log\l \kappa_{CP}\zeta\r}{\zeta},1,\zeta d\r$ when  $\zeta =\frac{1}{\kappa_u}$, can assume a range of different values. Whenever the SP sets $p=\frac{\alpha \log\l \kappa_{CP}\zeta\r}{\zeta}$, the CP sets $\frac{b_t}{d_t}=\zeta$, and the market will be stable. By choosing that price, the SP ensures that she will extract all the profit of the CP and makes her indifferent between joining the sponsorship program and opting out, i.e. $u_{CP}(b)=0$ (using Corollary~\ref{remark:indifferent}). 
\end{remark}


In the next theorem, we find the stable demand that maximizes the payoff of the SP when she chooses the third stable point, i.e. the minimum quality.  

\begin{theorem}\label{theorem:SPd}
	Let $d^*=\frac{N}{\zeta}-\frac{1}{\l \alpha+\nu_1\r \log\l \kappa_{SP}\zeta\r}$. The payoff of the SP when the 5-tuple  $\l d,1,\frac{\alpha \log\l \kappa_{CP}\zeta\r}{\zeta},1,\zeta d\r$ (the minimum quality stable point) is the stable outcome of the market is  maximized when either (1) $d=\min\{d^*,\frac{\hat{N}}{\zeta}\}$ and $d^*\geq 0$, or (2) $d=0$ and $d^*<0$.
\end{theorem}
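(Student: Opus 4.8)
The plan is to substitute the minimum-quality equilibrium strategies into the SP's payoff, reduce the problem to the maximization of a one-dimensional concave function of the stable demand $d$, and then project the unconstrained optimizer onto the feasible interval. From Theorem~\ref{theorem:stable} (point~3), the minimum-quality stable point fixes $p=\frac{\alpha\log\l\kappa_{CP}\zeta\r}{\zeta}$ and $b=\zeta d$, with $d$ ranging over $\l 0,\frac{\hat N}{\zeta}\r]$. Plugging these, together with the satisfaction function \eqref{equ:old_A}, into \eqref{equ:SPutility} and using $pb=\alpha d\log\l\kappa_{CP}\zeta\r$ and $\frac{b}{d}=\zeta$, the SP's payoff becomes a function of $d$ alone:
\be
U(d)=\alpha d\log\l\kappa_{CP}\zeta\r+\nu_1 d\log\l\kappa_{SP}\zeta\r+\nu_2 D\log\l\kappa_{SP}\frac{N-\zeta d}{D}\r .
\ee
So the task reduces to solving $\max_{0<d\le \hat N/\zeta} U(d)$ and comparing the optimal value with the no-sponsoring payoff $u_{SP,0}=\nu_2 D\log\l\kappa_{SP}\frac{N}{D}\r$ introduced in Theorem~\ref{Theorem:SP}.

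Next I would establish concavity of $U$. The first two terms are linear in $d$, while the last term is the composition of the concave $\log$ with the affine decreasing map $d\mapsto N-\zeta d$, hence concave, with second derivative $-\nu_2 D\zeta^2/(N-\zeta d)^2<0$ on the admissible range $d<N/\zeta$. Concavity makes the first-order condition necessary and sufficient for the unconstrained optimum. Setting $U'(d)=0$ gives
\be
\alpha\log\l\kappa_{CP}\zeta\r+\nu_1\log\l\kappa_{SP}\zeta\r=\frac{\nu_2 D\zeta}{N-\zeta d},
\ee
and solving for $d$ yields the stationary point $d^*$ stated in the theorem (after collecting the logarithmic coefficients under the paper's convention on the constants).

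Third, I would convert the unconstrained optimum into the constrained one over $\l 0,\frac{\hat N}{\zeta}\r]$ and handle the degenerate endpoint. By concavity $U$ increases on $(-\infty,d^*)$ and decreases on $(d^*,N/\zeta)$, and since $\hat N\le N$ the whole feasible interval lies inside the domain. If $d^*\ge 0$, the constrained maximizer is the projection of $d^*$ onto the feasible interval: it equals $d^*$ when $d^*\le \hat N/\zeta$, and equals $\hat N/\zeta$ otherwise, because $U$ is still increasing at the right endpoint; in both cases $d=\min\{d^*,\hat N/\zeta\}$. If instead $d^*<0$, then $U$ is strictly decreasing on the whole positive axis, so $U(d)<\lim_{d\to 0^+}U(d)=u_{SP,0}$ for every feasible $d>0$, meaning no positive demand can beat the no-sponsoring payoff; the SP's best outcome then collapses to $d=0$, whose payoff is exactly $U(0)=u_{SP,0}$.

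I expect the $d^*<0$ branch to be the main obstacle, since the feasible set is open at the origin: there is no maximizer strictly inside $\l 0,\frac{\hat N}{\zeta}\r]$ in that case, and one must argue that the optimum is only approached in the limit $d\to 0^+$ and then interpret this limit as the degenerate no-sponsoring regime ($y=z=0$), verifying via \eqref{equ:old_B} that $U(0)=u_{SP,0}$ so that setting $d=0$ is indeed (weakly) optimal. The complementary branch $d^*\ge 0$ is routine concave maximization with endpoint clipping.
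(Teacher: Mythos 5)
Your proposal matches the paper's proof essentially step for step: substitute the minimum-quality stable point ($p=\frac{\alpha\log(\kappa_{CP}\zeta)}{\zeta}$, $b=\zeta d$) into the SP's payoff, observe that the resulting one-dimensional function of $d$ is concave, solve the first-order condition for the interior stationary point, and use concavity to clip the optimizer to $\frac{\hat N}{\zeta}$ or $0$ when that stationary point falls outside the feasible interval. Your additional care at the $d\to 0^{+}$ endpoint (identifying the limit with the no-sponsoring payoff $\nu_2 D\log(\kappa_{SP}\frac{N}{D})$) and your writing of the revenue term as $pb=\alpha d\log(\kappa_{CP}\zeta)$ rather than the paper's $\alpha d\log(\kappa_{SP}\zeta)$ are, if anything, slightly more precise than the paper's own writeup, but the argument is the same.
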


The proof of the theorem is presented in the appendix.



In the next sections,  we investigate the case in the SP is long-sighted and the CP is short-sighted.

\subsection{Long-Sighted SP, Short-Sighted CP} \label{section:CPlongSP}

A long-sighted SP sets the per-bit sponsorship fee in order to achieve a stable market, i.e. a stable demand for the content, and also to maximize the payoff in the long-run:
\begin{equation}\label{equ:long_SP_max}
U_{SP,Long\ Run}(\vec{p})= \lim_{T\rightarrow \infty}\frac{1}{T}  \sum_{t=1}^T u_{SP,t}(p_t)
\end{equation}
\normalsize

In this scenario, the SP is the leader of the game and therefore can set the equilibrium of the game individually by knowing that the CP is a myopic optimizer unit and follows the results in Theorem~\ref{theorem:CP_myopic}. Note that the long-sighted SP wants to asymptotically set a strategy that given the strategy of the CP, yields the highest profit. Thus, even with a long-sighted SP, the optimum strategy follows Theorem~\ref{Theorem:SP}, and we can use the result in Theorem \ref{theorem:stable}. 


The difference between this case and the previous case is the ability of the long-sighted SP to choose between the candidate stable points in  Theorem~\ref{theorem:stable}. Thus, the SP sets appropriate sponsoring fees at the beginning of the sponsoring program in order to asymptotically lock the stable outcome of the market in the chosen equilibrium.  

Note that from Theorem~\ref{theorem:stable}, when $\kappa_u>\frac{1}{\zeta}$, there is no stable sponsoring outcome, and if $\kappa_u<\frac{1}{\zeta}$, depending on the parameters of the market, the stable point 2, i.e. maximum bit sponsoring, or 4, i.e. interior stable point, is chosen by the SP. In this case, if $\nu_2$, i.e. the importance of non-sponsored data for end-users and SP, is high enough, the stable point 4 is chosen and set by the SP. In addition, increasing the  number of resources available with the SP, i.e. $N$, makes the stable point 2, i.e. maximum bit sponsoring, to yield the highest payoff, and thus is chosen by the SP. In the next theorem, we prove that when $\kappa_u=\frac{1}{\zeta}$, the  stable point 3 yields the highest payoff. 

\begin{theorem}\label{Theorem:longrunSP}
	If $\kappa_u=\frac{1}{\zeta}$, the minimum quality stable point, i.e. $\l d,1,\frac{\alpha \log\l \kappa_{CP}\zeta\r}{\zeta},1,\zeta d\r$, with the demand characterized in Theorem~\ref{theorem:SPd} yields the highest payoff for the SP. 
\end{theorem}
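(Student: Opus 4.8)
The plan is to prove the statement by reducing it to three pairwise comparisons: I would show that, when $\kappa_u=\frac{1}{\zeta}$, the minimum‑quality outcome (stable point~3 of Theorem~\ref{theorem:stable}), evaluated at the demand‑maximizing choice of Theorem~\ref{theorem:SPd}, yields an SP payoff at least as large as that of each of the other three candidate stable outcomes (no sponsoring, maximum bit sponsoring, interior point). Since the long‑sighted SP is the leader and—as argued just before the theorem—can asymptotically lock the market into any feasible stable outcome by choosing the price from the outset, establishing this dominance over the finite list of candidates is enough.

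The key structural observation I would exploit is that when $\kappa_u=\frac{1}{\zeta}$ the stable quality $\frac{1}{\kappa_u}$ coincides with the minimum desired quality $\zeta$. Hence all three sponsoring outcomes (points~2, 3, 4) share the same per‑user quality $\frac{b}{d}=\zeta$, i.e.\ $b=\zeta d$, and the satisfaction term $u_s(b)=\nu_1 d\log(\kappa_{SP}\zeta)+\nu_2 D\log(\kappa_{SP}\frac{N-b}{D})$ is pinned down by the pair $(d,b)$. Therefore, once the demand $d$ is fixed, the SP payoff $u_{SP}=p b+u_s(b)$ differs across these outcomes only through the per‑bit price $p$. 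I would then record the prices: points~2 and~4 both carry $p=\alpha\kappa_u=\frac{\alpha}{\zeta}$, whereas point~3 carries the strictly larger price $\frac{\alpha\log(\kappa_{CP}\zeta)}{\zeta}$, larger precisely because the standing assumption $\kappa_{CP}\zeta>e$ forces $\log(\kappa_{CP}\zeta)>1$.

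The comparison then becomes elementary. I would first check that the demands of points~2 and~4, namely $d=\frac{\hat N}{\zeta}$ and $d=\frac{N}{\zeta}-\frac{\nu_2}{\nu_1}D$, both lie in the admissible range $(0,\frac{\hat N}{\zeta}]$ of point~3 (the latter follows from the feasibility constraint $0<b\le\hat N$ of point~4 together with $b=\zeta d$). Evaluating point~3 at each of these demands produces the same $b$ and hence the same $u_s(b)$, but strictly larger revenue $pb$; so point~3 at the matching demand strictly dominates both point~2 and point~4. Because Theorem~\ref{theorem:SPd} maximizes the point‑3 payoff over all admissible $d$, the optimizing demand does at least as well as these matching demands, which gives dominance over points~2 and~4. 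For point~1 (no sponsoring), I would note that $d=0$ is the boundary of the point‑3 range and that there the point‑3 payoff reduces to $u_s(0)=\nu_2 D\log(\kappa_{SP}\frac{N}{D})=u_{SP,0}$; since the maximization of Theorem~\ref{theorem:SPd} runs over $d\in[0,\frac{\hat N}{\zeta}]$ and returns $d=0$ exactly when $d^*<0$, the optimal point‑3 payoff is automatically $\ge u_{SP,0}$, folding the no‑sponsoring case into point~3 as its degenerate boundary.

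The step I expect to be the main obstacle is not any calculation but the structural justification that the higher price of point~3 is simultaneously attainable and stability‑preserving. One must invoke Theorem~\ref{theorem:CP_myopic} to confirm that at $p=\frac{\alpha\log(\kappa_{CP}\zeta)}{\zeta}$ the CP's best response is $b=\zeta d$ for every $d\in(0,\frac{\hat N}{\zeta}]$, and then use $\zeta=\frac{1}{\kappa_u}$ together with Lemma~\ref{corollary:quality_24} to argue that this response keeps the quality equal to the stable quality, so the demand is left unchanged and the SP can indeed hold the market at any chosen $d$ while charging this maximal price (which, by Corollary~\ref{remark:indifferent}, extracts the CP's entire surplus). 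Once this feasibility is secured, the remaining payoff comparison is the one‑line price argument above, and no additional concavity beyond what Theorem~\ref{theorem:SPd} already provides is needed, since I only use that the maximum over $d$ dominates the value at any particular feasible $d$.
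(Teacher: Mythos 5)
Your proposal is correct and follows essentially the same route as the paper's own proof: both arguments fix the demand, observe that when $\kappa_u=\frac{1}{\zeta}$ every sponsoring stable point shares the quality $\frac{b}{d}=\zeta$ so the SP payoff differs only through the per-bit price, note that the minimum-quality point carries the strictly highest price since $\kappa_{CP}\zeta>e$ gives $\log(\kappa_{CP}\zeta)>1$, and use that its admissible demand range $(0,\frac{\hat{N}}{\zeta}]$ contains the fixed demands of the maximum-bit and interior points before invoking Theorem~\ref{theorem:SPd} for the optimal demand. Your extra touches---explicitly folding no-sponsoring in as the $d=0$ boundary and verifying attainability via Theorem~\ref{theorem:CP_myopic} and Lemma~\ref{corollary:quality_24}---add rigor but do not constitute a different approach.
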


\begin{remark}
	Note that in a minimum quality stable point, the CP is indifferent, i.e. all profit of the CP is extracted by the SP. Therefore, we can expect this stable outcome to be the most favorable for the SP. Thus, a long-sighted SP sets this stable point as the asymptotic outcome of the market when $\kappa_u=\frac{1}{\zeta}$. 
\end{remark}
The proof of the theorem is presented in the appendix.

\subsection{Short-Sighted SP, Long-Sighted CP}\label{section:SPlongCP}

Consider a CP that chooses $b_t$ in order to achieve a stable demand  and to maximize the payoff in the long-run:
\small
\begin{equation}\label{equ:long_CP_max}
U_{CP,Long\ Run}(\vec{b})= \lim_{T\rightarrow \infty}\frac{1}{T}  \sum_{t=1}^T u_{CP,t}(b_t)
\end{equation}
\normalsize

In the next theorem, we prove that for a long-sighted CP, the maximum bit sponsorship yields the highest payoff amongst the stable outcomes characterized in Theorem~\ref{theorem:stable}. Note that if the CP sponsors all the available units at the start of the sponsoring program, the sudden increase in the demand may push the market to the stable outcome of no sponsoring. Thus, given that the SP is short-sighted, the CP sets the number of bits for sponsoring appropriately over time, in order to achieve the demand of $\kappa_u \hat{N}$ eventually. \edit{With this demand and  $b=\hat{N}$, the market would be stable. However, note that not the SP is the leader of the game and may chooses a price other than $\alpha \kappa_u$, i.e. the price in the maximum bit sponsoring. In this case, the CP cannot set the stable outcome she prefers. Thus, the CP is forced to set a stable outcome that is also preferable for a short-sighted SP.}

\begin{theorem}\label{theorem:CP_long_sighted}
	The 5-tuple plausible stable sponsoring points, characterized in Theorem~\ref{theorem:stable}, in a decreasing order of the utility they yield to the CP are:  maximum bit sponsorship, interior stable point, and minimum quality.
\end{theorem}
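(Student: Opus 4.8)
The plan is to evaluate the CP's payoff \eqref{equation:utilityCP} in closed form at each of the three sponsoring stable points of Theorem~\ref{theorem:stable} and then compare them directly. The crucial simplification is that, by Lemma~\ref{corollary:quality_24}, every stable sponsoring outcome delivers the \emph{stable quality} $\frac{b}{d}=\frac{1}{\kappa_u}$, so the advertisement term becomes $\alpha d\log\left(\frac{\kappa_{CP} b}{d}\right)=\alpha d\log\left(\frac{\kappa_{CP}}{\kappa_u}\right)$, in which the logarithmic factor is identical across all three points. Hence at any sponsoring stable point the CP's payoff reduces to $u_{CP}=\alpha d\log\left(\frac{\kappa_{CP}}{\kappa_u}\right)-p\,b$, and the ordering will be driven entirely by the price $p$ and the volume $b$ that each point prescribes.

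First I would treat the maximum-bit and interior points together, since Theorem~\ref{theorem:stable} assigns both the same price $p=\alpha\kappa_u$ and both satisfy $d=\kappa_u b$. Substituting yields $u_{CP}=\alpha\kappa_u b\left(\log\left(\frac{\kappa_{CP}}{\kappa_u}\right)-1\right)$, with $b=\hat N$ for the maximum-bit point and $b=N-\frac{\nu_2}{\kappa_u\nu_1}D$ for the interior point. Because the feasibility condition for these points forces $\kappa_u\le\frac{1}{\zeta}$, combined with the standing assumption $\kappa_{CP}\zeta>e$ this gives $\frac{\kappa_{CP}}{\kappa_u}\ge\kappa_{CP}\zeta>e$, so the bracketed factor $\log\left(\frac{\kappa_{CP}}{\kappa_u}\right)-1$ is strictly positive. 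The payoff is therefore proportional to $b$, and the feasibility bound $b=N-\frac{\nu_2}{\kappa_u\nu_1}D\le\hat N$ for the interior point shows the maximum-bit payoff dominates the interior one.

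Next I would dispatch the minimum-quality point. Corollary~\ref{remark:indifferent} already establishes that its price $\frac{\alpha\log(\kappa_{CP}\zeta)}{\zeta}$ leaves the CP with exactly zero payoff; equivalently, substituting $b=\zeta d$ and $\kappa_u=\frac{1}{\zeta}$ into the reduced formula makes the advertisement revenue and the sponsoring cost cancel. Since the other two payoffs were shown to be strictly positive, we obtain the claimed chain $u_{CP}^{\max}\ge u_{CP}^{\mathrm{int}}>u_{CP}^{\min}=0$, which is precisely the asserted decreasing order.

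The one point to argue carefully is the domain of coexistence, and I expect this feasibility bookkeeping — rather than any calculation — to be the only delicate part. The minimum-quality point requires $\kappa_u=\frac{1}{\zeta}$, whereas the other two require only $\kappa_u\le\frac{1}{\zeta}$, so all three are simultaneously feasible exactly when $\kappa_u=\frac{1}{\zeta}$, where the full three-way ranking is established; for $\kappa_u<\frac{1}{\zeta}$ only the maximum-bit versus interior comparison is meaningful, and it continues to hold by the same argument. Once the stable-quality identity collapses the advertisement term, every remaining comparison is a one-line inequality.
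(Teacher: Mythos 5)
Your proof is correct and follows essentially the same route as the paper's: both invoke Lemma~\ref{corollary:quality_24} to collapse the advertisement term at the stable quality, use Corollary~\ref{remark:indifferent} to place the minimum-quality point at zero payoff, exploit $\kappa_u\le\frac{1}{\zeta}$ together with $\kappa_{CP}\zeta>e$ to make the factor $\log\left(\frac{\kappa_{CP}}{\kappa_u}\right)-1$ strictly positive, and conclude from the feasibility bound $N-\frac{\nu_2}{\kappa_u\nu_1}D\le\hat N$ that the maximum-bit point dominates the interior point. Your added remark on the domains of coexistence (all three points coexisting only when $\kappa_u=\frac{1}{\zeta}$) is a small clarification beyond the paper's proof but does not change the argument.
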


\begin{remark}
	In order to establish the results, note that the stable point of no sponsoring is not considered a \emph{sponsoring} stable point. Thus not listed in the theorem. 
	In addition, since the CP is indifferent in the minimum quality stable point, this point should be the least favorite one for the CP. The ordering of the maximum bit and the interior stable points follows from the fact that the payoff of the CP is strictly increasing in those outcomes.  \edit{The proof is presented in the Appendix.}
\end{remark}
\begin{figure}[t]
	\centering
	\includegraphics[width=0.4\textwidth]{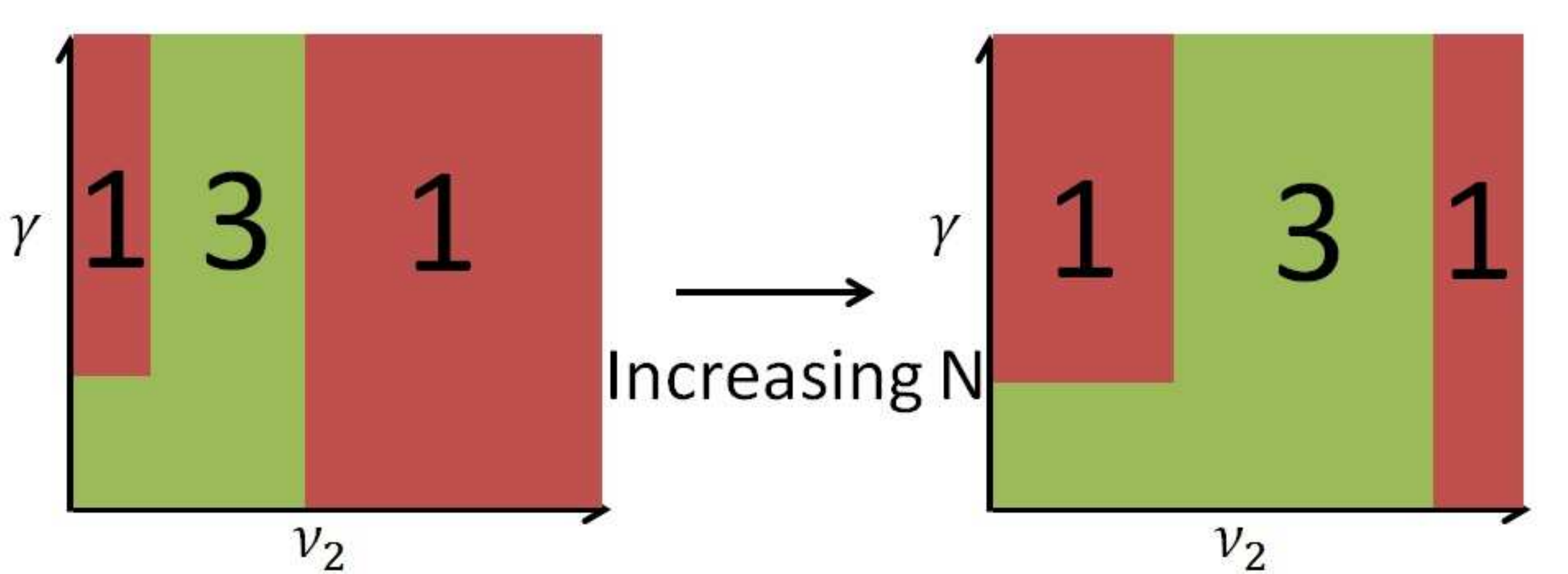}
	\caption{Market Asymptotic Outcomes with Short-Sighted Decision Makers when $\kappa_u =\frac{1}{\zeta}$}
	\label{figure:shortsighted_equal}
\end{figure}
\begin{figure}
	\centering
	\includegraphics[width=0.4\textwidth]{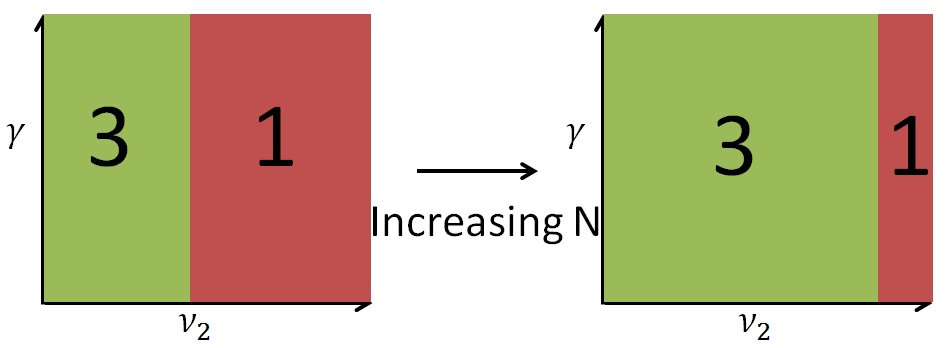}
	\caption{Market Asymptotic Outcomes when One of the Decision Makers is Long-Sighted, when $\kappa_u =\frac{1}{\zeta}$}
	\label{figure:shortsighted_longsighted_equal}
\end{figure}

\subsection{Numerical Results}\label{section:simulation_seq}

In this section, we consider \edit{at least one of} the SP and the CP to have a short-sighted  model, and investigate the effects of $\zeta$, $\kappa_u$, $v_2$, $N$, and $\gamma$ on the asymptotic outcome of the market.  The fixed parameters considered  are $\nu_1=1$, $\hat{N}=25$, $D=50$, $\kappa_{SP}=\kappa_{CP}=10$, and $\zeta=0.3$. We  observe the effect of important parameters such as  $\kappa_u$, the sensitivity of the demand to the quality ($\gamma$), the weight an SP assigns to non-sponsored data ($\nu_2$), and the total number of available bits in an LTE frame ($N$) on the asymptotic outcome of the market. 

\edit{Market asymptotic outcomes for the case of $\kappa_u=\frac{1}{\zeta}$ when both decision makers are short-sighted and when one of them is long-sighted are presented in Figures~\ref{figure:shortsighted_equal} and \ref{figure:shortsighted_longsighted_equal}, respectively. Similar plots for the case of   $\kappa_u=\frac{1}{2\zeta}<\frac{1}{\zeta}$ are presented in Figures ~\ref{figure:shortsighted_lessthan} and \ref{figure:shortsighted_longsighted_lessthan}, respectively.} Recall from Theorem~\ref{theorem:stable} that the asymptotic stable outcome of the game is one the four candidates: 1. No-Sponsoring, 2. Maximum bit sponsoring, 3. Minimum quality, and 4. Interior stable point. In the figures, each of the numbers are corresponding to one of the  candidates. We  also denote the unstable outcome by $0$.

Next, we discuss about the effect of \edit{the framework} and parameters on the asymptotic outcome of the game:

{\bf Impact of a decision-maker with long-sighted model:} \edit{Note that in both cases $\kappa_u=\frac{1}{\zeta}$ and $\kappa_u<\frac{1}{\zeta}$,  the outcome of the market is independent of  the CP or the CP being long-sighted or the SP being long-sighted. The reason is that in the sequential game the SP is the leader of the game. Thus, although a long-sighted SP can set the stable outcome she prefers in the long-run, a long-sighted CP cannot enforce the most preferred stable outcome, and should choose the stable outcome that is also preferable for the  SP.  This yields the same asymptotic outcome for the market as the case that the SP is long-sighted.}

{\bf Impact of the minimum quality ($\zeta$) and the stable quality ($\frac{1}{\kappa_u}$):} Theorem \ref{theorem:stable} implies that depending on the relation between the minimum quality set by the CP ($\zeta$) and the stable quality ($\frac{1}{\kappa_u}$), the market has different stability outcomes in the short-sighted scenario. We seek to identify the stable outcomes that arise in each different parameter ranges:

$\bullet$ Using Corollary~\ref{corollary:overprovision}, If the CP \emph{over-provisions} the minimum quality for the satisfaction of users ($\zeta>\frac{1}{\kappa_u}$), there is no stable sponsoring outcome since the demand of users grows drastically forcing the SP and CP to exit the sponsoring program.  Thus, we do not study this case through simulations.

\begin{figure}[t]
	\centering
	\includegraphics[width=0.4\textwidth]{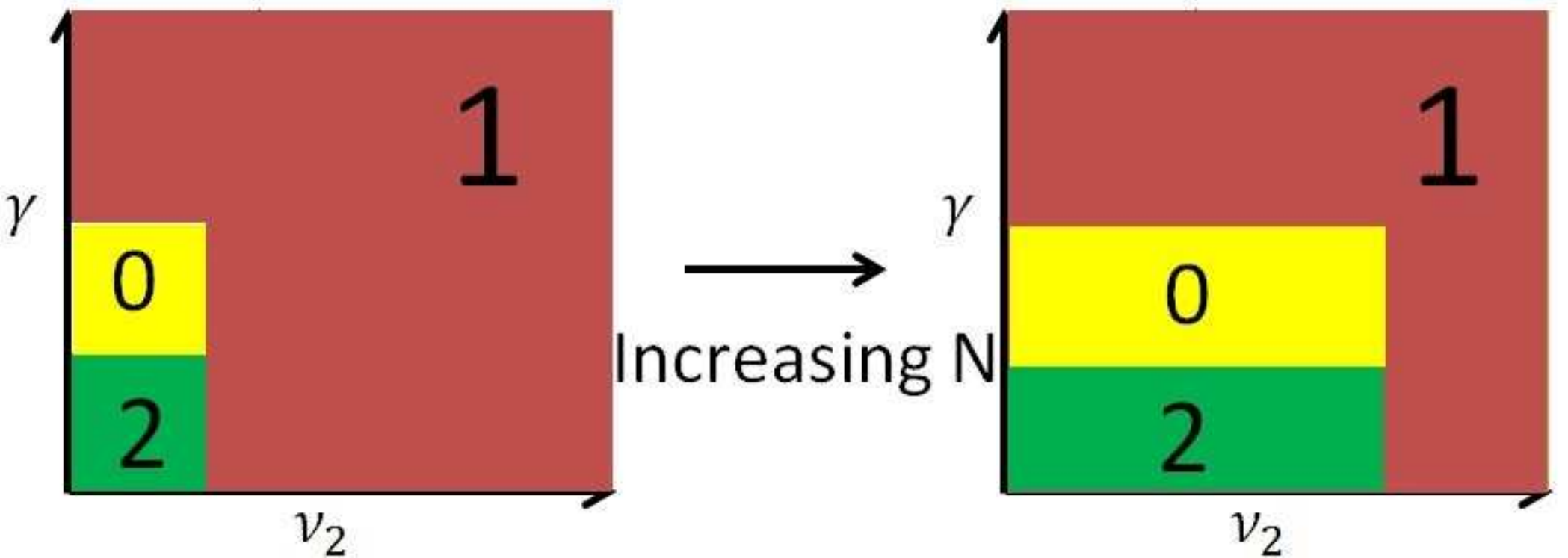}
	\caption{Market Asymptotic Outcomes with Short-Sighted Decision Makers, when $\kappa_u < \frac{1}{\zeta}$}
	\label{figure:shortsighted_lessthan}
\end{figure}

\begin{figure}
	\centering
	\includegraphics[width=0.4\textwidth]{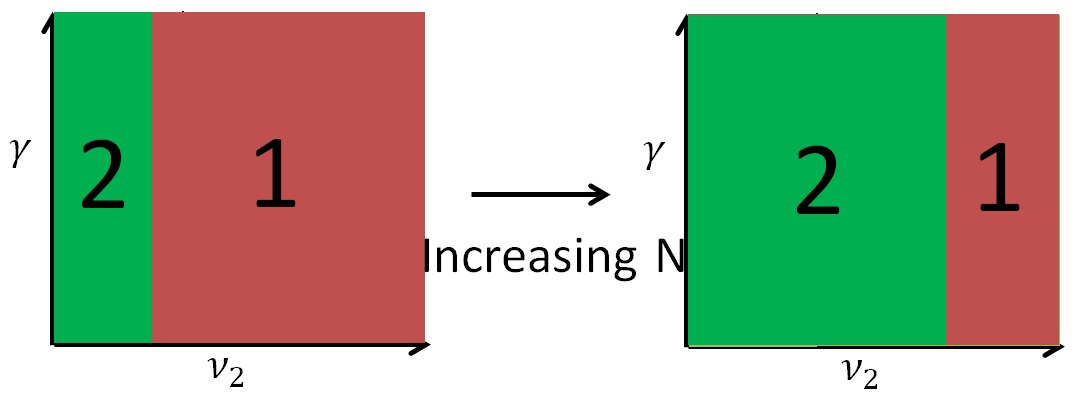}
	\caption{Market Asymptotic Outcomes when One of the Decision Makers is Long-Sighted, when $\kappa_u < \frac{1}{\zeta}$}
	\label{figure:shortsighted_longsighted_lessthan}
\end{figure}


$\bullet$ If the CP \emph{under-provisions} the minimum quality ($\zeta<\frac{1}{\kappa_u}$), simulation results in Figure~\ref{figure:shortsighted_lessthan} reveal that the market is set on the maximum bit sponsoring stable outcome for a particular range of parameters. However, the market is unstable or has the stable outcome of no-sponsoring for the rest of parameters.

$\bullet$ If the CP sets the minimum quality equal to the stable quality ($\zeta=\frac{1}{\kappa_u}$), the market is set either on the no-sponsoring or the minimum quality stable outcomes. Comparing Figures~\ref{figure:shortsighted_equal}, \ref{figure:shortsighted_longsighted_equal}, \ref{figure:shortsighted_lessthan}, and \ref{figure:shortsighted_longsighted_lessthan} reveals that in the case of $\kappa_u=\frac{1}{\zeta}$, the market is more likely to have a stable outcome that involves sponsoring.





{\bf Impact of the sensitivity of the demand to the quality, $\gamma$:}
Note that, in Figures \ref{figure:shortsighted_equal} and \ref{figure:shortsighted_lessthan}, and in general, increasing the value of $\gamma$ shifts the stable outcome of the market from sponsoring to no-sponsoring. The exception is a range of $v_2$ for the case $\kappa_u=\frac{1}{\zeta}$, which we explain about when we  discuss about the impact of $v_2$ later.


Therefore, in a market with short-sighted entities, the sensitivity of the demand to the quality, i.e. $\gamma$, greatly influences the stability of the market. When $\gamma$ is high, the satisfaction and subsequently the demand of end-users increases/decreases drastically with small changes in the rate perceived by them. Thus, players would exit the sponsorship program  since the demand may go down to zero or the demand may exceed $d_{max}=\frac{\hat{N}}{\zeta}$\footnote{the highest number of end-users that can be satisfied with the minimum quality.}, \edit{i.e. the jump in the demand decreases the quality received by the users below the requested minimum quality ($\zeta$) which leads the CP to stick to the best-effort scenario.}
On the other hand, if $\gamma$ is small, the market is more likely to be set on a sponsoring stable outcome.


Thus, in order to have a stable outcome of sponsoring, $\gamma$ should be sufficiently small. Note that this parameter is small for a CP whose users are less sensitive to the quality they receive, such as shopping websites. This is in contrast with streaming websites whose users are sensitive to the quality (high $\gamma$). The parameter $\gamma$ is also small for a CP which has a well-established end-user side, i.e. a more stable demand, such as Google, in contrast with the emerging CPs and start ups whose demand usually fluctuate more. Thus, in a short-sighted setting, the QSD may not be a viable option for streaming websites and emerging CPs.  

\edit{In addition, note that the effect of $\gamma$ would be canceled if one of the decision makers is long-sighted. This implies that in QSD framework, CPs with volatile demand (high $\gamma$) should be long-sighted to have a market with a stable outcome that involves sponsoring the content.}

{\bf Impact of the importance of non-sponsored contents, $\nu_2$:} The parameter $v_2$ being large, when $v_1$ is normalized to one, represents the fact that the SP assigns more weight to the satisfaction of users for using non-sponsored content. Thus, the SP wants to restrict the number of bits she offers for sponsoring, and the best strategy for the SP is to set her per-bit sponsorship fee high enough so that the CP sponsors a smaller number of bits. Thus, we expect the market to have a stable outcome of no-sponsoring when $v_2$ is high. 

Results in Figures \edit{ \ref{figure:shortsighted_equal}, \ref{figure:shortsighted_longsighted_equal},  \ref{figure:shortsighted_lessthan}, and \ref{figure:shortsighted_longsighted_lessthan} } confirm that the market has the stable outcome of no-sponsoring when $v_2$ is large. One of the differences between the cases $\kappa_u=\frac{1}{\zeta}$ and $\kappa_u<\frac{1}{\zeta}$ is that when $\kappa_u=\frac{1}{\zeta}$, for a certain range of $v_2$, the stable sponsoring outcome is 3, i.e. the minimum quality stable outcome, regardless of $\gamma$, i.e. the sensitivity of the demand to the quality. Next, we explain the reason for this behavior. Note that, as we mentioned, $v_2$ being high is associated with lower bits sponsored. Thus, for a certain range of $v_2$, we expect the CP to start the sponsoring program with a quality near the minimum quality (since the CP wants to sponsor at least the minimum quality). In the case that $\kappa_u=\frac{1}{\zeta}$, \eqref{equ:d_t+1} implies that the demand increases more slowly (the logarithm in the expression is smaller). Thus, the effect of $\gamma$ is not significant, and the market can be stabilized on the minimum quality stable point regardless of $\gamma$. However, this does not happen for the case of $\kappa_u<\frac{1}{\zeta}$, since in this case, from \eqref{equ:d_t+1}, the demand of end-users diminishes to zero\footnote{Note that in this case the logarithm in  \eqref{equ:d_t+1} is negative for a quality near the minimum quality.}, and the market is set on the stable outcome of no-sponsoring.



{\bf Impact of total available resources, $N$:} \edit{Figures~\ref{figure:shortsighted_equal}, \ref{figure:shortsighted_longsighted_equal},  \ref{figure:shortsighted_lessthan}, and \ref{figure:shortsighted_longsighted_lessthan}  reveal that increasing the number of available resources ($N$) stretches the regions. In other words, as $N$ increases, results would be similar to that of smaller $v_2$'s. }

For example, in Figure~\ref{figure:shortsighted_equal}, increasing the number of available bits (resources) increases the area of no-sponsoring region \edit{for small $v_2$}. This is \emph{counter-intuitive} since one can expect that increasing the amount of available resources should facilitate sponsoring the content. This counter-intuitive result is due to the fact that by increasing $N$, the value of the SP for each bit decreases and the SP sets a lower sponsoring fee. This leads to sponsoring more bits by the CP which leads to a significant increase in the demand for the content when $\gamma$ is large. This derives the market to the point of no-sponsoring. Therefore, the outcome is the same as the case in which $v_2$ is very small: the minimum quality stable point when $\gamma$ is small and no-sponsoring when $\gamma$ is large. 

{\bf Impact of $v_1$ and $v_2$ on the social welfare:}
\edit{If we define the social welfare of the QSD regime as the sum of the payoffs of the CP and the SP\footnote{\edit{Note that the payoff of the SP includes a term for users' satisfaction function that captures the welfare of EUs for sponsored and non-sponsored contents (possibly with constants different from $v_1$ and $v_2$). In addition, the effect of the model on other CPs is also hidden in the users' satisfaction function (the term $v_2 D \log(\kappa_{SP}\frac{N-b_t}{D})$). Thus, sum of the utility of the CP and the SP (with possibly different $v_1$ and $v_2$) is a good indicator of the social welfare.}}, then important parameters for determining the social welfare of the system are $v_1$ and $v_2$ (can be imposed by the regulator), and $D$. In this case, if  $\frac{v_2}{v_1}$ or $D$ are high, i.e. when the weight on the content of a non-sponsored content is high,  then the SP restricts the number of bits she offers for sponsoring by quoting a high sponsorship fee (as explained before). Thus, either the CP reserves a smaller number of bits or exits the sponsorship program. In either cases, the outcome would be aligned with maximizing the social welfare. Thus, $v_1$ and $v_2$ can be imposed by the regulator to control the social welfare.}




\begin{remark} Figure~\ref{figure:shortsighted_lessthan} and \ref{figure:shortsighted_longsighted_lessthan} illustrate that the stable point $4$, i.e. interior stable point, does not emerge, and only stable points $1$ and $2$ occur. In other words, in a stable outcome, when stakeholders of the market are short-sighted, either the CP sponsors all the available resources or no sponsoring occurs. Note that in the stable point $4$, the number of bits sponsored by the CP in the equilibrium is $N-\frac{\nu_2}{\kappa_u \nu_1}D$. In addition, a stable sponsoring 5-tuple occurs only when $\nu_2$ is small which makes $N-\frac{\nu_2}{\kappa_u \nu_1}D>\hat{N}$ for a wide range of parameters. Thus, the stable point $4$ does not emerge in many scenarios. One can argue that by decreasing $N$ or increasing $D$, we may have a scenario in which $N-\frac{\nu_2}{\kappa_u \nu_1}D<\hat{N}$. However, note these changes,  is similar to having a large $v_2$. Thus, similar to previous arguments, in this case, the SP is willing to set a price so high that leads the CP and market to a no-sponsoring outcome. Therefore, again in the regions that support an interior sponsoring solution, i.e. when $N-\frac{\nu_2}{\kappa_u \nu_1}D<\hat{N}$, the stable outcome $4$ would not occur. None of the parameters we considered results in such a stable outcome. 
\end{remark}

In the next section, using a bargaining framework, we  investigate the scenario in which the decision makers have a \emph{long-sighted  model}, i.e. consider the effect of their decisions on the evolution of the demand and subsequently their payoff.


\section{Bargaining Framework: NBS Analysis} \label{section:bargaining}

In the previous section, we proved that a long-sighted CP and SP can prefer different stable outcomes, i.e. the stable outcomes that yield the highest payoff for them. If both decision makers are long-sighted, since  \edit{multiple asymptotic outcomes are plausible}, playing a sequential game may lead to a Pareto-inefficient outcome \edit{in the long-run}\footnote{\edit{
		The strategies of the SP and the CP are Pareto-inefficient in the long-run if	at least one of  the CP or the SP  can increase her payoff,  by changing her strategy, without decreasing the other player's payoff. An example of an inefficient outcome that occurs in our model is when $v_2$ is small and $\gamma$ is large and both players are short-sighted. According to Figure~\ref{figure:shortsighted_equal} the asymptotic outcome of the game would be the no-sponsoring outcome. On the other hand, in Figure \ref{figure:shortsighted_longsighted_equal}, with the same parameters, when one of the decision makers is long-sighted (which means that she chooses different strategies to maximize her long-run payoff), the asymptotic outcome of the market would be the minimum quality sponsoring outcome in which the SP receives a strictly higher payoff than the previous case. Thus, the  outcome of the sequential game, can be Pareto inefficient in the long-run.}}. 
Therefore, when both of the decision makers are long-sighted, it is natural to consider a bargaining game framework\footnote{We can also consider a bargaining game when decision makers are short-sighted. However, in this paper, we consider two extreme scenarios: (1) non-cooperation/at least one decision maker short-sighted and (2) cooperation/both long-sight-sighted, and compare the outcome of the market in these two extreme cases.}. A bargaining game provides the framework to model the scenario in which two selfish agents can cooperatively  select an equilibrium outcome (possibly among multiple equilibria)  when non-cooperation, i.e. \emph{disagreement}, yields Pareto-inefficient results Note that both cases (multiple Nash equilibria and Pareto-inefficient outcome) occur in our modeling in Section~\ref{section:analysisI} when at least one of the decision makers is short-sighted. After selecting the equilibrium, the division of profits can be characterized using the bargaining game frameworks considering the \emph{bargaining power} of each decision maker.

\edit{In Section~\ref{section:bargaining_analy}, we formulate and analyze the bargaining game. In Section~\ref{section:bargaining_simulation}, we present numerical results for this framework and discuss about the results.}



\subsection{Nash Bargaining Solution (NBS)}\label{section:bargaining_analy}

Thus, we formulate the interaction between the CP and the SP as a bargaining game, and  use the \emph{Nash Bargaining Solution} (NBS) to characterize the bargaining solution to the problem when both the SP and the CP are long-sighted. 


\begin{definition}\emph{Nash Bargaining Solution (NBS):} is the unique solution ( in our case the tuple of the payoffs of the CP and the SP) that satisfies the four ``reasonable" axioms (Invariant to affine transformations, Pareto optimality, Independence of irrelevant alternatives, and Symmetry) characterized in \cite{osbornebargaining}.
\end{definition}

Let $0\leq w\leq 1$ be the relative bargaining power of the CP over SP: the higher $w$, more powerful is the bargaining power of a CP. In addition, $u_{CP}$ and $u_{SP}$ denote the payoff of the CP and SP respectively, and $d_{CP}$ and $d_{SP}$ denote the payoff each decision maker receives in case of disagreement, i.e. \emph{disagreement payoff}. In order to characterize the disagreement payoffs, we assume that in case of disagreement, the SP and the CP will interact as short-sighted entities playing the sequential game previously described\footnote{\edit{The reason is that if in the case of disagreement, the CP and the SP continue their selfish non-cooperative behavior, they can obtain  a payoff higher than or equal to the payoff of no-sponsoring. The inequality is strict for the cases that a sequential game yields a sponsoring outcome.}}. Thus, the disagreement payoffs can be found by determining the asymptotic status of the market: the asymptotic payoff of the CP and the SP if the market is asymptotically stable, or the average payoffs if the market is unstable. Note that the value of the disagreement payoff for the CP and the SP can have an effect similar to the bargaining power ($w$ for the CP and $1-w$ for the SP). 

Using standard game theoretic results in \cite{osbornebargaining}, the pair of $u^*_{CP}$ and $u^*_{SP}$ can be identified as the Nash bargaining solution of the problem if and only if it solves the following optimization problem:
\begin{equation}\label{equ:nash_solution_1}
\begin{aligned}
&\max_{u_{CP}, u_{SP}}{(u_{CP}-d_{CP})^{w}(u_{SP}-d_{SP})^{1-w}}\\
& \text{s.t.}\\
&\qquad (u_{CP},u_{SP})\in U\\
& \qquad (u_{CP},u_{SP})\geq (d_{CP},d_{SP})
\end{aligned}
\end{equation}
where $U$ is the set of feasible payoff pairs. Note that the long-sighted SP and CP want to set a stable market in the long-run\footnote{\edit{An unstable market makes it difficult for the CP and the SP to make decisions, predict the demand, or manage the network. Thus, an unstable market has its implicit costs for the CP and the SP. This is the reason that we assumed that the CP and the SP want to set an stable market.}}, and based on Lemma~\ref{corollary:quality_24} in a stable outcome $\frac{b}{d}=\frac{1}{\kappa_u}$. Thus, the expressions for $u_{CP}$ and $u_{SP}$ in a stable outcome (using \eqref{equation:utilityCP} and \eqref{equ:SPutility}) are functions of the demand ($d$) and as follows: 
\begin{equation}\label{equ:uCPbarg}
u_{CP}(d)=u_{ad}(d)-p \frac{d}{\kappa_u}
\end{equation}
\begin{equation}\label{equ:uSPbarg}
u_{SP}(d)= p \frac{d}{\kappa_u}+u_s(d)
\end{equation}
where $u_{ad}(d)=\alpha d \log(\frac{\kappa_{CP}}{\kappa_u})$ is the advertisement profit for the CP, and $u_s(d)=\nu_1 d \log(\frac{\kappa_{CP}}{\kappa_u})+\nu_2 D\log(\kappa_{SP}\frac{N-\frac{d}{\kappa_u}}{D})$ is the satisfaction of the end-users of the SP. In addition, $p\frac{d}{\kappa_u}$ is the side-payment transferred from the CP to the SP in exchange of securing a quality of $\frac{1}{\kappa_u}$ for the demand ($d$). Note that in Section~\ref{section:model}, we introduced $u_{ad}(.)$ and $u_{s}(.)$ as functions of the number of sponsored bits ($b$). Here, we redefine them to be functions of demand ($d$) since in a stable outcome $d=\kappa_u b$. 

Thus, the maximization \eqref{equ:nash_solution_1} is over $d>0$ and $p$. In addition, note that the maximum demand that can be satisfied with maximum available resources of $\hat{N}$ to provide the quality of $\frac{1}{\kappa_u}$ is $\kappa_u\hat{N}$ ($d=\kappa_u b\leq \kappa_u \hat{N}$), which constitutes the feasible set. Thus the maximization is,
\begin{equation}\label{equ:nash_solution_2}
\begin{aligned}
&\max_{d, p}{(u_{CP}-d_{CP})^{w}(u_{SP}-d_{SP})^{1-w}}\\
& \text{s.t.}\\
&\qquad 0\leq d\leq \hat{N}\kappa_u \\
&\qquad u_{CP}\geq d_{CP}\\
&\qquad u_{SP}\geq d_{SP}
\end{aligned}
\end{equation}

We define $p^*$ and $d^*$ to be the optimum solution of \eqref{equ:nash_solution_2}. Note that $p^*$ and $d^*$ characterize the optimum  division of profit ($u^*_{CP}$ and $u^*_{SP}$) and thus the NBS. In addition, we define the \emph{aggregate excess profit} to be the additional profit yielded from the cooperation in the bargaining framework:
\begin{definition}\emph{Aggregate Excess Profit ($u_{excess}$):} The aggregate excess profit is defined as follows:
	\vspace{-1mm}
	\begin{equation}
	u_{excess}=u_{CP}-d_{CP}+u_{SP}-d_{SP}=u_{ad}-d_{CP}+u_s-d_{SP}
	\end{equation}
	
\end{definition}
Note that $u_{excess}$ in independent of $p$ and is only a function of $d$. We define $u^*_{excess}= u_{\text{excess}}|_{d=d^*}$. Note that the bargaining would only occur if $u^*_{excess}>0$, i.e.  the framework creates additional joint profit that can be divided between the SP and the CP.  Thus, henceforth, we characterize the NBS for the case that $u^*_{excess}>0$. We use $u_{excess}$ in the following theorem:

\begin{theorem}\label{theorem:bargaining}
	If  $u^*_{\text{excess}}> 0$, the optimum solution of the optimization \eqref{equ:nash_solution_2} is $(d^*,p^*)$ in which $d^*=\text{arg}\max_{0\leq d\leq \hat{N}\kappa_u}{u_{\text{excess}}}$, and $p^*$ is:
	\begin{equation}\label{equ:optimum_p}
	\begin{aligned}
	p^*&=\frac{\kappa_u}{d} \Big{(}(1-w)(u^*_{ad}-d_{CP})-w(u^*_s-d_{SP})\Big{)}\\
	&=\frac{\kappa_u}{d} \Big{(}(u^*_{ad}-d_{CP})-wu^*_{excess}\Big{)}
	\end{aligned}
	\end{equation}
	where $u^*_{ad}=u_{ad}|_{d=d^*}$, $u^*_s=u_s|_{d=d^*}$.

\end{theorem}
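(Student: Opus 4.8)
The plan is to exploit the transferable-utility structure of the problem: the side-payment $p\frac{d}{\kappa_u}$ enters $u_{CP}$ and $u_{SP}$ with opposite signs, so it is a pure transfer that leaves the joint surplus untouched. Writing $x := u_{CP}-d_{CP}$ and $y := u_{SP}-d_{SP}$ for the two players' excess payoffs, one has $x+y = u_{ad}(d)+u_s(d)-d_{CP}-d_{SP}=u_{excess}(d)$, which, as already noted before the theorem, is independent of $p$ and depends on the decision variables only through $d$. Thus the feasible set of \eqref{equ:nash_solution_2} can be reparametrized: for each fixed $d$ with $u_{excess}(d)>0$, as $p$ ranges over the interval on which both $u_{CP}\ge d_{CP}$ and $u_{SP}\ge d_{SP}$ hold, the pair $(x,y)$ traverses exactly the segment $\{x+y=u_{excess}(d),\ x\ge 0,\ y\ge 0\}$, and since $d>0$ makes $p\mapsto p\frac{d}{\kappa_u}$ affine and invertible, this correspondence is a bijection.

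I would then decompose the maximization into an inner and an outer stage. For fixed $d$, the inner problem is to maximize $x^{w}y^{1-w}$ subject to $x+y=S$ with $S:=u_{excess}(d)>0$ and $x,y\ge 0$. Taking logarithms gives the strictly concave objective $w\ln x+(1-w)\ln(S-x)$, whose first-order condition $w/x=(1-w)/(S-x)$ yields the unique maximizer $x^{*}=wS$, $y^{*}=(1-w)S$, i.e. the standard proportional split of the Nash bargaining solution. The optimal inner value is therefore $w^{w}(1-w)^{1-w}\,S = w^{w}(1-w)^{1-w}\,u_{excess}(d)$, a fixed positive constant times $u_{excess}(d)$.

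The outer stage is then immediate: since the constant $w^{w}(1-w)^{1-w}$ does not depend on $d$, maximizing the inner-optimal value over $0\le d\le \hat{N}\kappa_u$ is equivalent to maximizing $u_{excess}(d)$, giving $d^{*}=\arg\max_{0\le d\le \hat{N}\kappa_u}u_{excess}(d)$ exactly as claimed. Existence of $d^{*}$ follows from continuity of $u_{excess}$ on the compact interval; in fact $u_{excess}$ is concave in $d$, being the sum of a linear term and the concave term $\nu_2 D\log(\kappa_{SP}\frac{N-d/\kappa_u}{D})$. Finally I would recover $p^{*}$ from the optimal split: substituting $x^{*}=w\,u^{*}_{excess}$ into $x^{*}=u^{*}_{ad}-p^{*}\frac{d^{*}}{\kappa_u}-d_{CP}$ and solving gives $p^{*}=\frac{\kappa_u}{d^{*}}\big((u^{*}_{ad}-d_{CP})-w\,u^{*}_{excess}\big)$, and using $u^{*}_{excess}=(u^{*}_{ad}-d_{CP})+(u^{*}_s-d_{SP})$ rewrites this as $\frac{\kappa_u}{d^{*}}\big((1-w)(u^{*}_{ad}-d_{CP})-w(u^{*}_s-d_{SP})\big)$, matching the two displayed forms.

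The main obstacle I anticipate is not the optimization itself but the feasibility and degeneracy bookkeeping. I must justify that $d^{*}>0$ so that $p^{*}$ is well defined; this uses the hypothesis $u^{*}_{excess}>0$ together with the fact that at $d=0$ the excess is non-positive, since the disagreement payoffs are at least the no-sponsoring payoffs. I must also confirm that the proportional split $x^{*}=w\,u^{*}_{excess}$, $y^{*}=(1-w)\,u^{*}_{excess}$ satisfies the individual-rationality constraints $u_{CP}\ge d_{CP}$ and $u_{SP}\ge d_{SP}$, which it does automatically because $w,1-w>0$ and $u^{*}_{excess}>0$ force $x^{*},y^{*}>0$. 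Verifying the bijection between $p$ and the surplus segment, and that any $d$ with $u_{excess}(d)\le 0$ cannot be optimal, then closes the argument.
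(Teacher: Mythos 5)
Your proof is correct and follows essentially the same route as the paper's: both fix $d$, optimize the transfer $p$ to obtain the Nash proportional split $u_{CP}-d_{CP}=w\,u_{excess}$ (equivalently $\frac{u_{CP}-d_{CP}}{w}=\frac{u_{SP}-d_{SP}}{1-w}$), which reduces the objective to $w^{w}(1-w)^{1-w}u_{excess}(d)$, and then maximize $u_{excess}$ over $0\leq d\leq \hat{N}\kappa_u$ to get $d^*$ and back out $p^*$. The only differences are cosmetic: you derive the proportional split from the logarithmic first-order condition rather than citing the bargaining-split result as the paper does, and you add feasibility bookkeeping (the bijection in $p$, the argument that $d^*>0$, and automatic satisfaction of the individual-rationality constraints) that the paper leaves implicit.
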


\begin{remark}
	The theorem characterizes $p^*$ and $d^*$ which directly lead to the NBS (using \eqref{equ:uCPbarg} and \eqref{equ:uSPbarg}), i.e. $(u^*_{CP},u^*_{SP})$. Based on the  theorem, before splitting the profit, the SP and the CP cooperatively set a stable quality and subsequently a stable demand to maximize the aggregate excess profit, $u_{excess}$ by solving the concave maximization problem ($\max_d u_{excess}$) with the single parameter $d$. 
	Subsequently, they decide the split of the additional profit, i.e. the side payment paid to the SP by the CP ($p^* \frac{d^*}{\kappa_u}$), based on \eqref{equ:optimum_p} which depends on the bargaining power each has ($w$ and $1-w$). \edit{The proof of Theorem is presented in the Appendix.}
\end{remark}

\begin{remark}
	As we mentioned before, the value of the disagreement payoffs can also play a similar role as the bargaining power ($w$). From \eqref{equ:optimum_p}: $d_{CP}\uparrow \Rightarrow p\downarrow$, and $d_{SP}\uparrow \Rightarrow p\uparrow$.    
\end{remark}

\begin{remark}{\textbf{Price vs. Bargaining Power:}} The price per sponsored bit \eqref{equ:optimum_p} is a decreasing function of $w$ , i.e. the bargaining power of the CP: the higher the bargaining power of the CP the lower the side payment paid to the SP. It follows from \eqref{equ:optimum_p} that there exists a threshold on $w$, $w_t=\frac{u^*_{ad}-d_{CP}}{u^*_{excess}}$, such that  when $w>w_t$, $p^*<0$, when $w<w_t$, $p^*>0$, and when $w=w_t$, $p^*=0$. In other words, for the CP with a bargaining power higher than the threshold $w_t$, the flow of money is reversed, and the SP pays the CP. This counter-intuitive case occurs  either due to a high bargaining power of the CP (high $w$), or in the scenario that the SP gain significantly more than the CP from the cooperative scenario ($u^*_{excess}>>u^*_{ad}-d_{CP}$, i.e. low $w_t$). For example, a powerful CP, e.g. Google, which already has a large established  demand for the content might be reluctant to cooperate with the SP unless the SP pays some of the additional profit to it.   
\end{remark}

\subsection{Numerical Results}\label{section:bargaining_simulation}
Now, consider the SP and the CP with long-sighted business model 	that play a bargaining game as described in this Section.  We investigate the effects of bargaining and cooperation between the CP and the SP in increasing the utility of each of them and stabilizing the market. In addition, we discuss about the relation between the number of available resources ($N$) and the Nash bargaining price ($p^*$).

We consider $w=0.5$, i.e. the CP and the SP have the same bargaining power, and the values of $v_1$, $\hat{N}$, $D$, $\kappa_{SP}$, and $\kappa_{CP}$ to be the same as  those considered in Section~\ref{section:simulation_seq}.

First, we plot the percentage of increase in the payoff of the CP and the SP after bargaining versus $v_2$ for different values of $\zeta$ when $\kappa_u=\frac{1}{\zeta}$ in Figures~\ref{figure:increase_CP_equal} and \ref{figure:increase_SP_equal}.  The percentage of increase in the utility after bargaining is defined as follows:

\begin{equation}\label{equ:increase}
\begin{aligned}
&\text{increase (percentage)}=\\
&\quad \frac{\text{utility after bargaining-utility before bargaining}}{\text{utility after bargaining}}\times 100
\end{aligned}
\end{equation}

\begin{figure}[t]
	\centering
	\includegraphics[width=0.4\textwidth]{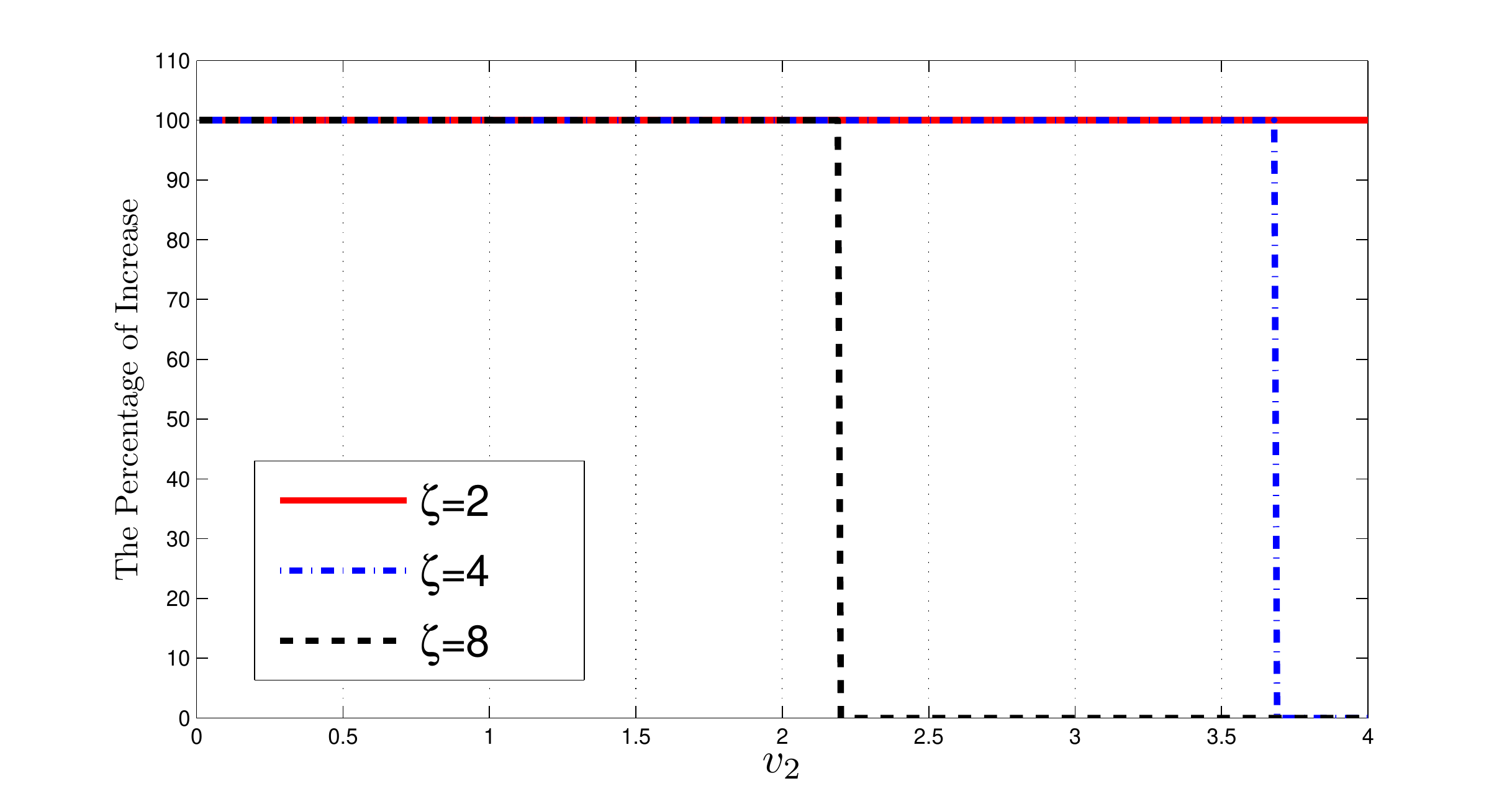}
	\caption{The percentage of increase in the utility of the CP when $\kappa_u =\frac{1}{\zeta}$ with respect to $v_2$ for different values of $\zeta$.}
	\label{figure:increase_CP_equal}
\end{figure}

Note that when the utility before bargaining is zero and the utility after bargaining is positive, then the increase in the utility, using \eqref{equ:increase}, is 100 percent, and this value is zero if the utility after bargaining is  equal to the utility before bargaining.

Results in Figure~\ref{figure:increase_CP_equal} reveal that the percentage of increase in the payoff of the CP is either zero or 100. Note that when $\kappa_u=\frac{1}{\zeta}$, the market either has a stable outcome of no sponsoring or a stable outcome of minimum quality sponsoring. In both cases, the utility of the CP is zero. Thus, if bargaining occurs, the CP would get a positive payoff, and the percentage of the increase in the utility of the CP would be 100. In Figure~\ref{figure:increase_CP_equal}, we can see a threshold on $v_2$ after which the bargaining does not occur. This threshold is decreasing with respect to $\zeta$. The reason is intuitive: even in a bargaining framework, due to limited resources, sponsoring does not occur if the CP needs a high quality to be sponsored, and/or the quality of non-sponsored data is important for the SP.


\begin{figure}[t]
	\centering
	\includegraphics[width=0.4\textwidth]{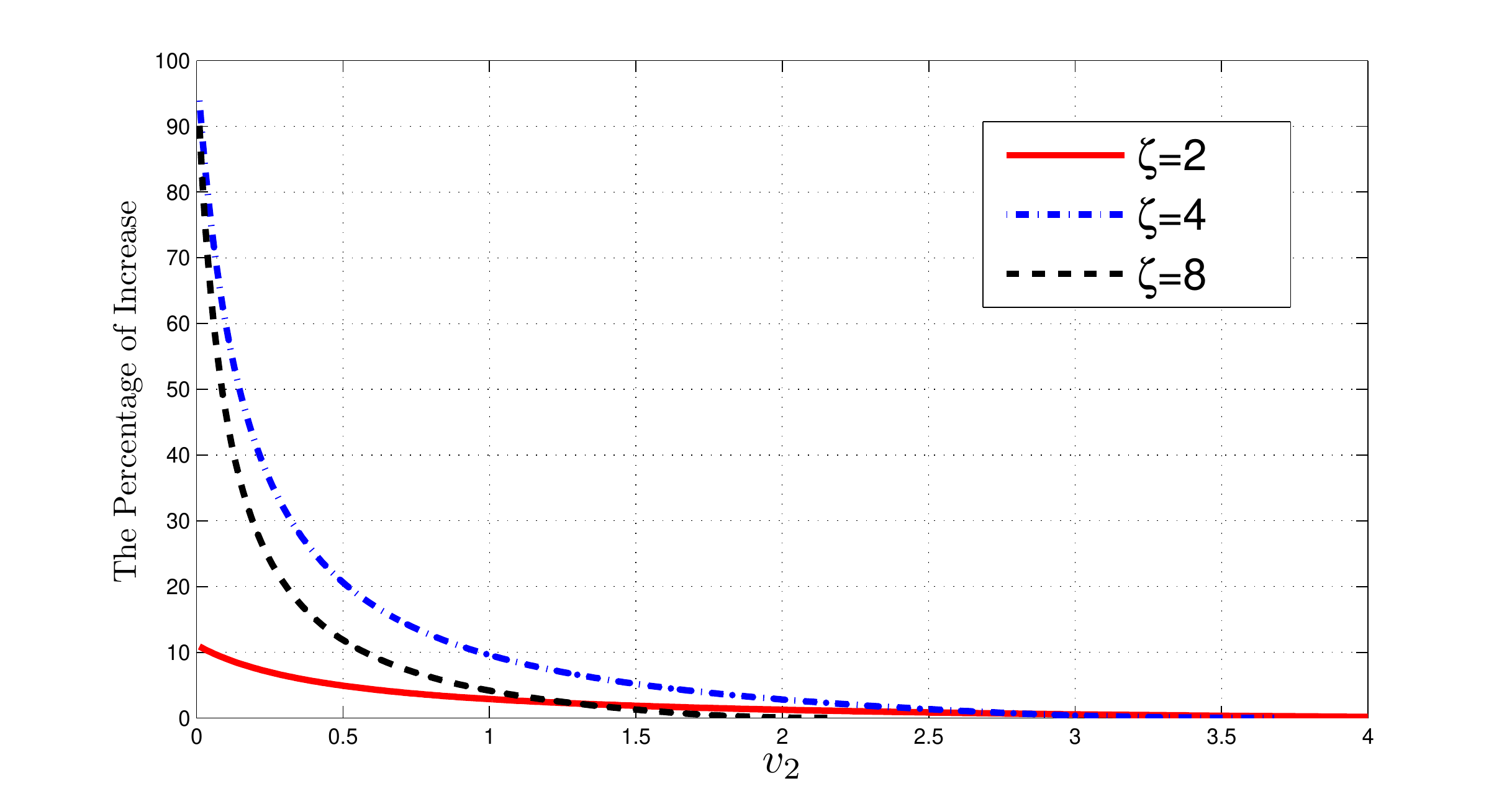}
	\caption{The percentage of increase in the utility of the SP when $\kappa_u =\frac{1}{\zeta}$ with respect to $v_2$ for different values of $\zeta$.}
	\label{figure:increase_SP_equal}
\end{figure}

Results in Figure~\ref{figure:increase_SP_equal} reveal that the percentage of increase in the utility of the SP after bargaining is decreasing with respect to $v_2$. In other words, the higher the importance of non-sponsored data for the end-users and subsequently the SP, the lower the incentive of the SP for participating in a bargaining game.  Note that the case $\zeta=2$ is corresponding to  a minimum quality stable outcome in the short-sighted framework. Thus, bargaining does not add greatly to the utility of the SP. On the other hand, $\zeta=4$ and $\zeta=8$ are corresponding to a stable point of no sponsoring in the short sighted framework. Therefore, the increase in the utility of the SP from bargaining is higher in these two cases than $\zeta=2$. In addition, the percentage of increase is decreasing with respect to $\zeta$. In other words, the higher the minimum quality needed to be sponsored, the lower the incentive of the SP for a bargaining framework.

\begin{figure}[t]
	\centering
	\includegraphics[width=0.4\textwidth]{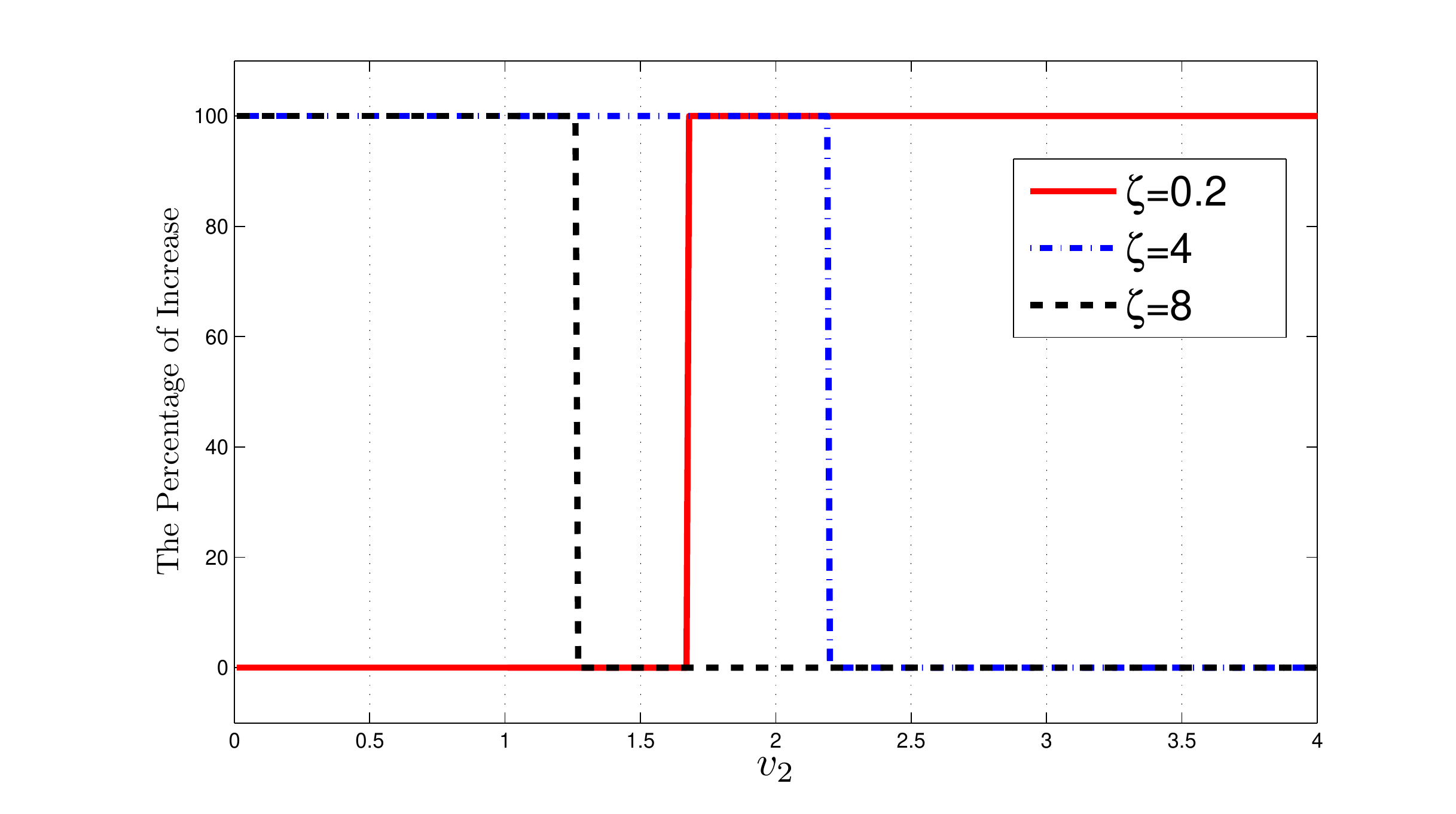}
	\caption{The percentage of increase in the utility of the CP when $\kappa_u =\frac{1}{2\zeta}$ with respect to $v_2$ for different values of $\zeta$.}
	\label{figure:increase_CP_less}
\end{figure}

\begin{figure}[t]
	\centering
	\includegraphics[width=0.4\textwidth]{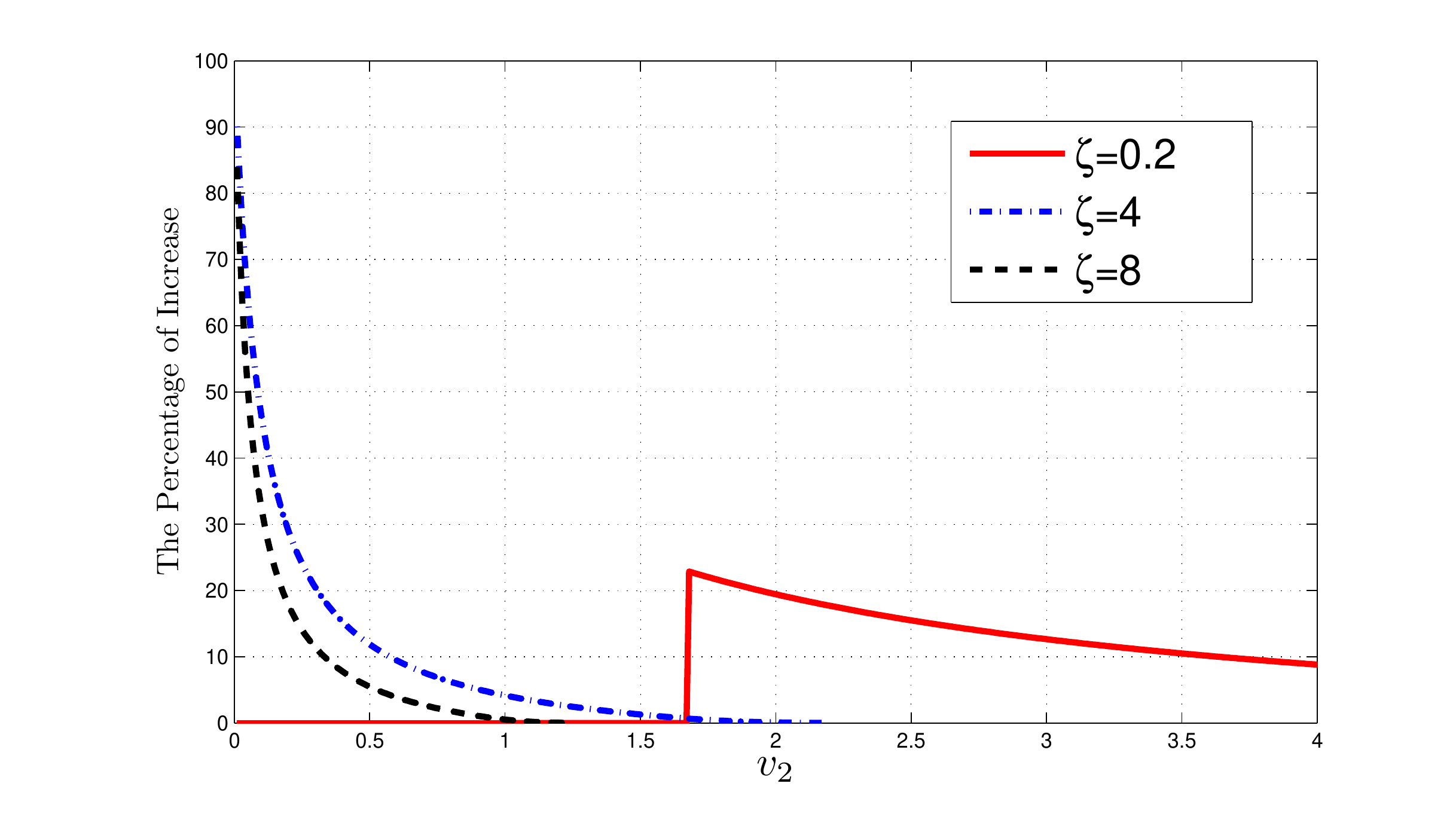}
	\caption{The percentage of increase in the utility of the SP when $\kappa_u =\frac{1}{2\zeta}$ with respect to $v_2$ for different values of $\zeta$..}
	\label{figure:increase_SP_less}
\end{figure}

In Figures~\ref{figure:increase_CP_less} and \ref{figure:increase_SP_less}, the percentage of the increase in the payoff of the CP and the SP is plotted when $\kappa_u=\frac{1}{2\zeta}$. Note that for the case of $\zeta=0.2$, when $v_2$ is small, the stable outcome of a short-sighted market would be the maximum bit sponsoring. Since, in this case, this stable outcome, yields the highest payoff for the SP and the CP, bargaining cannot create additional profit. Thus, the percentage of increase in the utility of the SP and the CP is zero up a threshold. For $v_2$ higher than this threshold, and the cases $\zeta=4$ and $\zeta=8$, the corresponding short-sighted outcome of the market is no sponsoring stable outcome. Thus, the results is similar to the previous figures (Figures \ref{figure:increase_CP_equal} and \ref{figure:increase_SP_equal}). 

Note that bargaining can enforce sponsoring for the set of parameters that have no stable sponsoring outcome in a sequential game. However, the bargaining framework cannot always enforce sponsoring. In particular, if the CP needs to sponsor a high quality (high $\zeta$) for the content, or the quality of non-sponsored content is important for the end-users and subsequently the SP (high $v_2$), then sponsoring does not occur regardless of the framework used.

The next set of numerical results investigate the relation between the number of available resources ($\hat{N}$) and the Nash bargaining price ($p^*$). Intuitively, one may expect that higher  number of available resources yields a lower valuation of the SP for each unit of resources, and subsequently a lower price for each bit. While this line of thought seems to be true in the sequential  framework, numerical results reveal a more complex relationship between $p^*$ and $\hat{N}$ in the bargaining framework: the negotiated price can be increasing, decreasing, or a combination of both (Figures \ref{figure:pvsNhat_less} and \ref{figure:pvsNhat_equal}).

The reason for this counter-intuitive behavior is the different disagreement payoffs resulting from different asymptotic outcomes of the game when decision makers are short-sighted. The disagreement payoffs can be considered as a form of bargaining power for each decision maker, and can affect the excess profit resulted by bargaining. Thus, different disagreement payoffs lead to different amounts of excess profit and its division between the CP and the SP, and subsequently different behavior of price per sponsored bit.


\begin{figure}[t]
	\centering
	\includegraphics[width=0.4\textwidth]{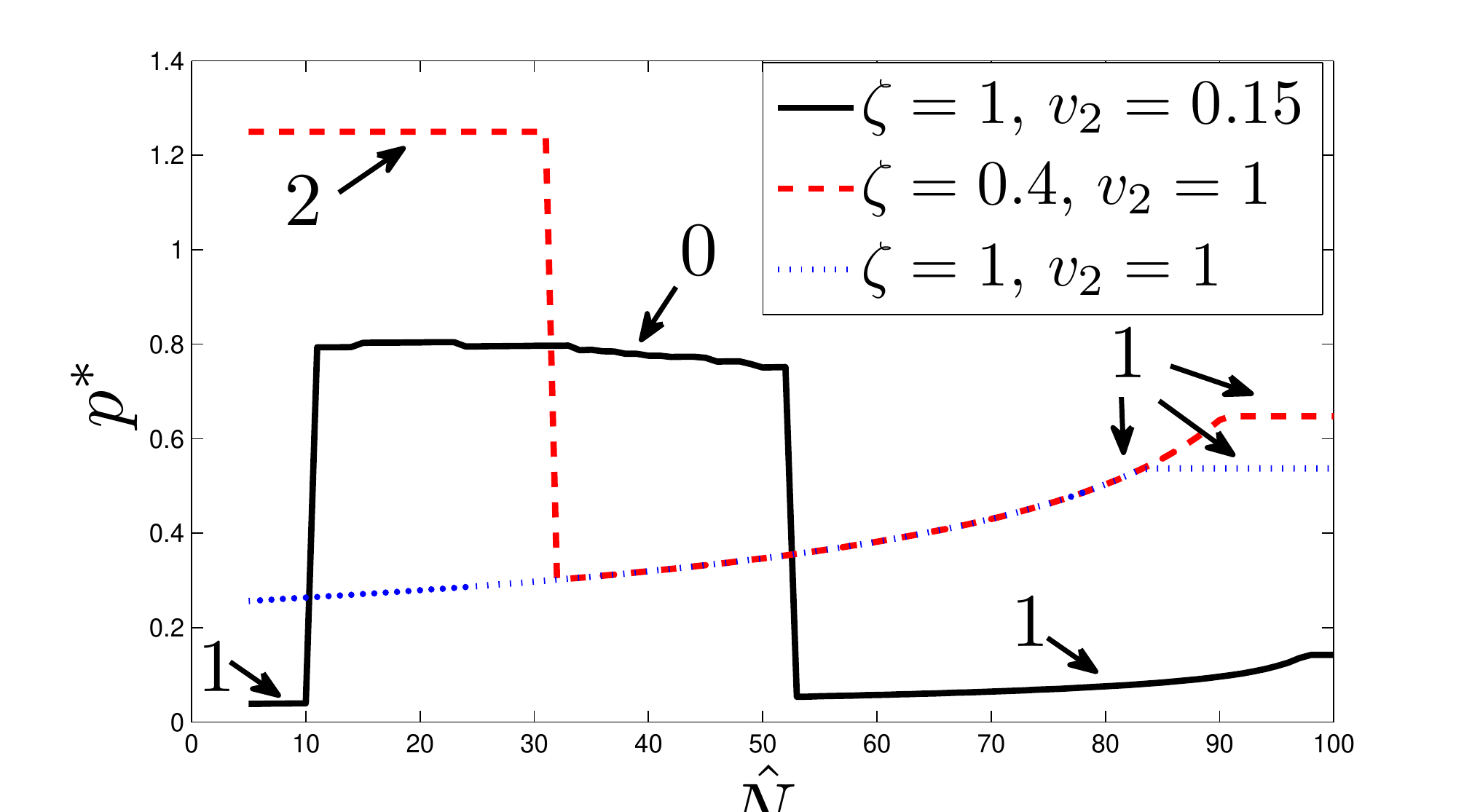}
	\caption{The price per bit in a Nash bargaining solution versus the available number of bits when $\kappa_u=\frac{1}{2\zeta}$.}
	\label{figure:pvsNhat_less}
\end{figure}


\begin{figure}[t]
	\centering
	\includegraphics[width=0.4\textwidth]{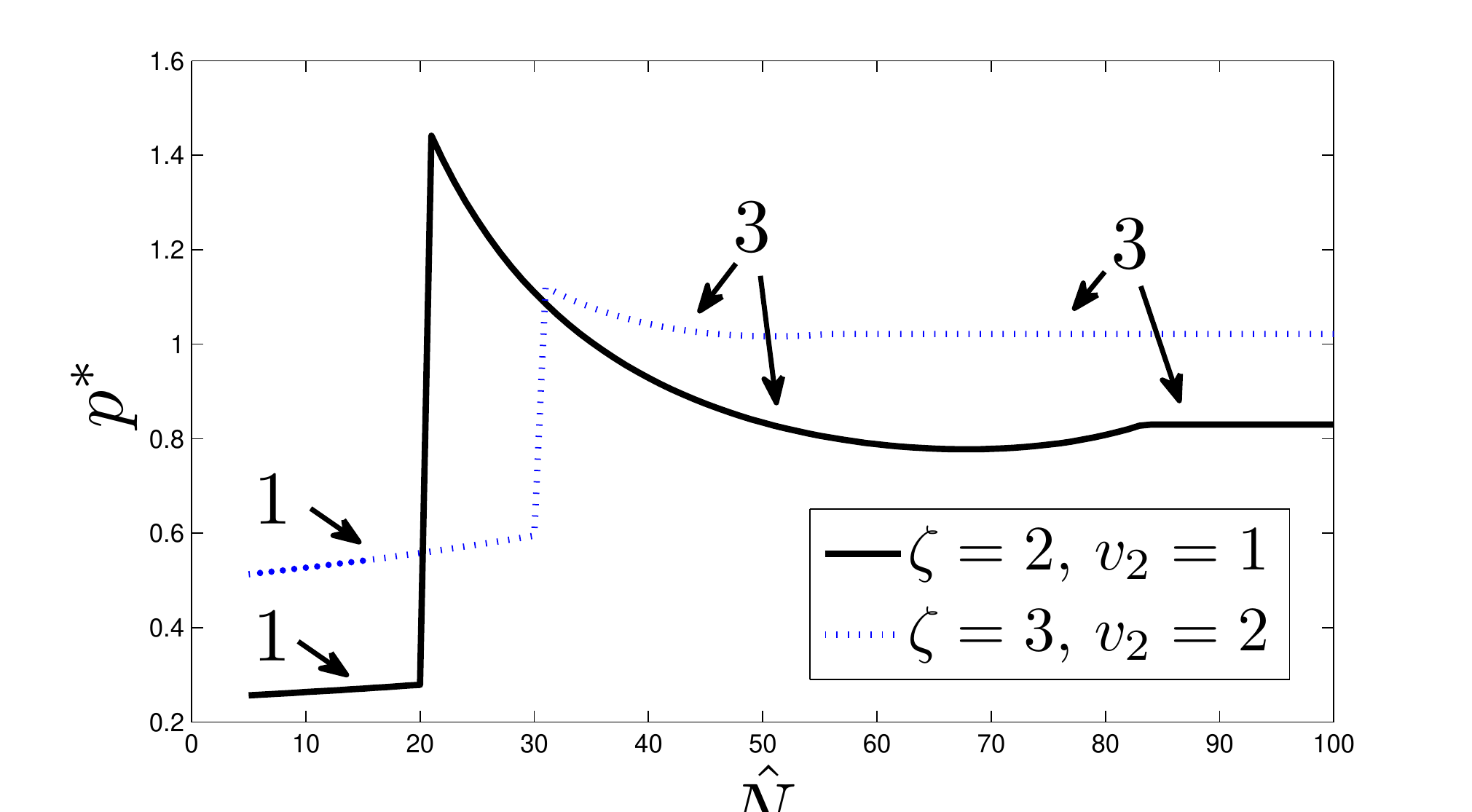}
	\caption{The price per bit in a Nash bargaining solution versus the available number of bits when $\kappa_u=\frac{1}{\zeta}$.}
	\label{figure:pvsNhat_equal}
\end{figure}

\section{Discussions}\label{section:discussion_final}
\edit{First in Section~\ref{section:results}, we present high-level perspective of the  results. Then in Section~\ref{section:modellingAssumptions},  we discuss about the modeling and assumptions of this paper, their implications, and their generalizations.}

\subsection{Summary of Key Results}\label{section:results}
\edit{We discussed that relation between the minimum quality  the CP requests, i.e. $\zeta$, and the stable quality, i.e. $\frac{1}{\kappa_u}$,\footnote{\editt{Recall that the stable quality is defined in Definition~\ref{defintion:stablequality}.}} is an important factor in determining the asymptotic outcome of the market (Section \ref{section:simulation_seq}). In particular if the CP over-provisions the minimum quality, i.e. $\zeta>\frac{1}{\kappa_u}$, then there is no stable sponsoring outcome. \editt{The stability can be achieved when $\zeta<\frac{1}{\kappa_u}$ (under-provision). However, the set of parameters for which the market is stable is larger when $\zeta=\frac{1}{\kappa_u}$. }  Thus, a QSD framework is more likely to emerge  for CPs that know the dynamic of their demand ($\frac{1}{\kappa_u}$) and are willing to disclose it (by requesting $\zeta=\frac{1}{\kappa_u}$).  However, note that in a sequential framework and if $\zeta=\frac{1}{\kappa_u}$, the utility of  the CP would be zero, i.e. the	additional profit of the  CP by sponsoring the content  would be fully extracted if the CP reveals the true value of the stable quality. \editt{Thus, the CP would be indifferent between this scheme and neutrality, while the SP receives a higher payoff in QSD scheme.} While in a bargaining framework (if it happens), \editt{both the CP and the SP receive a higher payoff in comparison to a neutral framework.}  Thus,  a bargaining framework is preferable especially for the CP.}


\edit{We showed that a CP with a volatile demand, i.e. a CP whose users are sensitive to the quality (high $\gamma$), leads to a no-sponsoring outcome in a sequential framework (non-cooperative scenario) if both the SP and CP are short-sighted.  Examples of such CPs are streaming websites and emerging CPs (start ups). Thus, a QSD framework is not a viable scenario in long run for these CPs, if the decision-makers are short-sighted. For these CPs, we can expect a stable QSD framework only if one of the CP or the SP is long-sighted, or in a bargaining game framework.}	\edit{In addition, we showed that even in a bargaining framework, an SP who assigns  high weights to the satisfaction of users that use the non-sponsored data  (high $\frac{v_2}{v_1}$), chooses to not sponsor the content of a CP who needs high quality (high $\zeta$). \footnote{\editt{As explained in the model, the parameter $v_2$ can also represent the regulatory policies for the quality of the experience of the users that use the non-sponsored content. In this case, a high $v_2$ is corresponding to stricter (net-neutrality) rules.}}}

	\edit{Moreover, results reveal that investment by the SP is not always in favor of having a stable QSD. Increasing the number of available resources for sponsoring (investment by the SP),  when at least one of the decision makers is long-sighted, increases the range of the parameters by which a stable QSD framework occurs. However, when both the SP and CP are short-sighted, increasing the number of resources may lead to a scenario in which the CP sponsors a large number of resources. If the demand is volatile, this yields a sudden jump in the demand,  and drives the market to a no-sponsoring outcome (the jump in demand decreases the quality below the minimum quality), and leads the CP to use the best-effort scenario.}

\subsection{Comments on the Assumptions of the Model}\label{section:modellingAssumptions}
\edit{We  assume logarithmic functions  for the demand update function and utilities  owing to its concavity. However, our analysis and insights are expected to be applicable to other concave functions with diminishing returns.}

\edit{Note that the focus of this work is on the interaction between an SP and a CP, and not on the competition among SPs and CPs. In particular, we consider that only one CP wants to sponsor a quality for her users, and the rests stick to the best effort scenario. The effects of other CPs are considered by the SP as part of her utility. Introducing competition among CPs and SPs would introduce another level of strategic decisions by them. It does not necessarily alter the high-level intuitions for the interaction of the CP and the SP provided in this work. For example, we can expect that even under competition, a CP with a volatile demand would not be a good options for a QSD framework in a non-cooperative scenario. However, considering the competition among CPs provides intuitions on the possible structure of the Internet market in future under a QSD framework. For example, a possible outcome would be the case that competitive CPs divide SPs (and subsequently end-users) among themselves and each sponsors the quality of the content on only one of the SPs. Using this, each CP can secure a monopoly over users. This would be a mild version of the ``Internet fragmentation"  which might be an undesirable outcome for users and  from the perspective of the FCC.	A possible direction for future work is to consider the competition over end-users among ISPs and CPs.}

\edit{In addition,  we assume that quality is sponsored by reserving a number of  resources, e.g. LTE time-frames. In general, SPs can sponsor high quality for users of CPs using various methods, e.g. by prioritization of the content of a CP. Analyzing different methods of sponsoring the quality of a content is beyond the scope of this paper.}

\section{Conclusion}
We introduced the problem of quality-sponsored data (QSD) in cellular networks and studied its implications on market entities in sequential and bargaining game frameworks in various scenarios. The direct coupling between the scarce (wireless) resources and the market decisions resulting from QSD has been taken into account, Subgame Perfect Nash Equilibrium and Nash Bargaining Solution of the problem is characterized, and the market dynamics and equilibira have been investigated. We provided strategies for (i) SPs: to determine if and how to price resources, and (ii) CPs: to determine if and how many resources to sponsor (what quality).  
In this work, we focused on the interaction between ISPs and CPs. A possible direction for future work is to consider the competition over end-users among ISPs and CPs. \edit{Another direction is to consider the effects of QSD on the payments of user to SPs, and its implications on the results.
	}

\bibliographystyle{IEEEtran}
\bibliography{bmc_article}

\appendices

\section{Proof of Theorem~\ref{theorem:CP_myopic}}

	First we consider the case in which $d_t > \frac{\hat{N}}{\zeta}$. In this case, 
	$\zeta d_t>\hat{N}$. Therefore, there is no feasible solution for $b_t$. Thus, as we mentioned previously after \eqref{equ:optimization_CP_1}, in this case of infeasibility,  the CP exits the sponsorship program, i.e. $z^*_t=0$. In addition, from \eqref{equation:utilityCP}, $d_t=0$ yields $u_{CP,t}(b_t)<0$ for every $b_t>0$, and subsequently $z^*_t=0$. This completes the proof of \eqref{equ:opt_CP_1_2}. 
	
	Thus, henceforth, we consider $ 0<d_t \leq \frac{\hat{N}}{\zeta}$. Clearly, the utility of the CP \eqref{equation:utilityCP} is concave. Thus, the first order optimality condition provides us with the candidate optimum answer for \eqref{equ:optimization_CP_1}. The first order condition yields that $\hat{b}_t=\frac{\alpha d_{t}}{p_t}$.  In order to be an optimum answer, $\hat{b}_t$ should be feasible, i.e. $\zeta d_t \leq \hat{b}_t \leq \hat{N}$. This characterizes a region for ${p}_t$, $\frac{\alpha d_t}{\hat{N}}\leq p_t\leq \frac{\alpha}{\zeta}$. 
	In order to determine $z^*$, we should check non-negativity of $u^*_{CP,t}$. The utility of the CP  with $\hat{b}_t=\frac{\alpha d_{t}}{p_t}$ is non-negative if $p_t\leq \frac{\alpha \kappa_{CP}}{e}$. Since $\zeta \kappa_{CP}>e$ \footnote{The condition to have a non-trivial problem stated in Section~\ref{section:model}.}, $\frac{\alpha}{\zeta}<\frac{\alpha \kappa_{CP}}{e}$. Therefore, a feasible solution for \eqref{equ:opt_CP_1_1} yields a non-negative payoff. Thus, $z^*_t=1$.  This is the second region from top in \eqref{equ:opt_CP_1_1}.


	If $p_t\leq\frac{\alpha d_t}{\hat{N}}$, then the top boundary condition $b^*_t=\hat{N}$ is the optimum answer of \eqref{equ:opt_CP_1_1}. In addition, since in this region $u_{CP,t}(\hat{N})$ is positive, $z^*_t=1$. This is the first optimality region of \eqref{equ:opt_CP_1_1}.  On the other hand, if $ p_t\geq \frac{\alpha}{\zeta}$, then the lower boundary condition, i.e. $\bar{b}_t=\zeta d_t$, is the optimum answer of the optimization. The condition for  $u_{CP,t}(\bar{b}_t)\geq 0$  and therefore $z^*_t=1$ is  $p_t\leq \frac{\alpha \log(\kappa_{CP} \zeta)}{\zeta}$ which yields the third optimality region in \eqref{equ:opt_CP_1_1}. If $p_t > \frac{\alpha \log(\kappa_{CP} \zeta)}{\zeta}$, $u_{CP,t}(b_t)<0$. Thus, $z^*_t=0$. This concludes the proof.

	\section{Proof of Theorem~\ref{Theorem:SP}}

	Theorem~\ref{theorem:CP_myopic} implies that if $ d_t> \frac{\hat{N}}{\zeta}$, or $d_t=0$, or $p_t>\frac{\alpha \log(\kappa_{CP}\zeta)}{\zeta}$, the CP does not participate in the sponsoring program. Thus, the value of $y^*_t$ does not affect the outcome of the market in these cases. Without loss of generality, we assume that in these cases the SP does not offer the program, i.e. $y^*_t=0$.

	Thus, henceforth, we consider $0<d_t\leq \frac{\hat{N}}{\zeta}$ and $p_t\leq \frac{\alpha \log(\kappa_{CP}\zeta)}{\zeta}$. Note that in this region, by Theorem~\ref{theorem:CP_myopic}, $b_t>0$. Thus, the SP maximization problem is,
	
	\small
	\begin{equation}
	\ba 
	&\max_{p_t} u_{SP,t}(p_t)=\max_{p_t} \Bigg{(} p_t b^*_t + \nu_1   d_{t} \log\l \frac{\kappa_{SP} b^*_t}{d_{t}}\r+\\
	&\qquad \qquad \qquad \qquad \qquad  +\nu_2 D \log \l \kappa_{SP} \frac{N-b^*_t}{D}\r \Bigg{)},
	\ea
	\end{equation}
	\normalsize
	where $b^*_t$ is the equilibrium outcome of the second stage. Let $p_t\leq \frac{\alpha d_t}{\hat{N}}$. Then from Theorem~\ref{theorem:CP_myopic}, $b^*_t=\hat{N}$. Thus, $u_{SP,t}(p_t)$ is a strictly increasing function of $p_t$. Therefore,  all prices less than $\frac{\alpha d_t}{\hat{N}}$ yields a strictly lower payoff than $p^*_{1,t}=\frac{\alpha d_t}{\hat{N}}$, which is the first candidate pricing strategy. Next, let  $\frac{\alpha}{\zeta}\leq p_t \leq \frac{\alpha \log(\kappa_{CP} \zeta)}{\zeta}$. Thus, from Theorem~\ref{theorem:CP_myopic}, $b^*_t=\zeta d_t$. Again, in this region, $u_{SP,t}(p_t)$ is a strictly increasing function of $p_t$. Thus, $p^*_{2,t}=\frac{\alpha \log(\kappa_{CP} \zeta)}{\zeta}$ strictly dominates all other prices in this interval, which yields the second candidate pricing strategy\footnote{ Note that $p^*_{2,t}=\frac{\alpha \log(\kappa_{CP} \zeta)}{\zeta}$ yields a payoff of zero for the CP. However, since we have assumed that the indifferent CP chooses to join the sponsorship program, $z^*_t=1$ and subsequently $y^*_t=1$.}. For the case that $\frac{\alpha d_t}{\hat{N}}\leq p_t\leq \frac{\alpha}{\zeta}$, from Theorem~\ref{theorem:CP_myopic}, $b^*_t=\frac{\alpha d_t}{p_t}$. In this region, the first order condition on $u_{SP,t}(p_t)$ provides us with the local extremum,
	
	\small
	\begin{equation}\label{equ:opt_p_2}
	p^*_{3,t}=\alpha \frac{\nu_1d_t+\nu_2 D}{\nu_1 N}
	\end{equation}
	\normalsize
	Since the second order derivative can be negative or positive, the first order condition provides us with only a candidate optimum answer,
	which is the third candidate pricing strategy. This candidate strategy should satisfy the condition $\frac{\alpha d_t}{\hat{N}}\leq p^*_{3,t}\leq \frac{\alpha}{\zeta}$. If not, it would not be an optimum answer since, as we discussed earlier in the proof, every price less than (respectively, higher than) $\frac{\alpha d_t}{\hat{N}}$ (respectively, $\frac{\alpha}{\zeta}$ ) is dominated by $\frac{\alpha d_t}{\hat{N}}$ (respectively, $\frac{\alpha \log(\kappa_{CP} \zeta)}{\zeta}$)\footnote{Note that from Theorem~\ref{theorem:CP_myopic}, prices higher than $\frac{\alpha \log(\kappa_{CP} \zeta)}{\zeta}$ leads to no sponsoring on the CP side.}. Note that these candidate strategies are optimum only if they yield a payoff higher than the payoff of the SP in the case of no-sponsoring, i.e. $v_2 D \log(\kappa_{SP}\frac{N}{D})$. The result follows.

\section{Proof of Theorem~\ref{theorem:stable}}

	We characterize the possible stable outcomes of the game when at each time $t$, the SP and the CP choose their strategy to be the SPNE of the game characterized in Theorems~\ref{theorem:CP_myopic} and \ref{Theorem:SP}.

	The first candidate stable outcome is trivial: as soon as one of the CP or SP exits the sponsorship program, or $d_t>\frac{\hat{N}}{\zeta}$, or $d_t=0$, the program will not be resumed.

	Now consider the case that sponsoring occurs. In this case, $y=1$ $z=1$, and from Theorem~\ref{Theorem:SP}, the SP chooses one of the candidate optimum pricing strategies from the set $P^*=\{\frac{\alpha d_t}{\hat{N}},\frac{\alpha \log \l \kappa_{CP}\zeta \r}{\zeta},\alpha \frac{\nu_1d_t+\nu_2 D}{\nu_1 N}\}$.  We show that the first, the second, and the third candidate pricing strategies are corresponding to the second, the third, and the fourth stable outcome, respectively. Note that when choosing these prices, by Theorem~\ref{Theorem:SP}, the demand should be feasible, i.e. $0<d_t\leq \frac{\hat{N}}{\zeta}$. In addition, recall that by Lemma~\ref{corollary:quality_24}, the demand is stable when $d=\kappa_{u}b$.

	Now, we obtain the second stable outcome by considering that $p=\frac{\alpha d}{\hat{N}}$ and $0<d_t\leq \frac{\hat{N}}{\zeta}$.  In this case, from Theorem~\ref{theorem:CP_myopic}, $b=\hat{N}$. Thus  $p=\alpha \kappa_u$ since $d=\kappa_u \hat{N}$. The feasibility condition yields that $d=\kappa_u \hat{N}\leq \frac{\hat{N}}{\zeta}\Rightarrow \kappa_u\leq \frac{1}{\zeta}$ \footnote{Note that $d=\kappa_u \hat{N}>0$.}.

	Next, we obtain the third stable outcome by considering $p=\frac{\alpha \log(\kappa_{CP} \zeta)}{\zeta}$ and $0<d_t\leq \frac{\hat{N}}{\zeta}$. 
	In this case, from Theorem~\ref{theorem:CP_myopic}, $b=\zeta d$, and subsequently from the stability condition, $d=\kappa_u b=\kappa_u \zeta d$. Therefore, this case occurs if  $\kappa_u \zeta=1$. Note that the demand  could be any positive value less than or equal to $\frac{\hat{N}}{\zeta}$ (feasibility condition), and with this demand, $0<b=\zeta d \leq \hat{N}$.

	Finally, the fourth possible stable outcome happens when $p=\alpha \frac{\nu_1d+\nu_2 D}{\nu_1 N}$, $p\in[\frac{\alpha d}{\hat{N}},\frac{\alpha}{\zeta}]$ (from Theorem~\ref{Theorem:SP}), and  $0<d_t\leq \frac{\hat{N}}{\zeta}$. In this case, from Theorem~\ref{theorem:CP_myopic}, $b=\frac{\alpha d}{p}$. In order to have a stable outcome,  $d=\kappa_u b \Rightarrow p=\alpha \kappa_u$. Thus, from $p=\alpha \frac{\nu_1d_t+\nu_2 D}{\nu_1 N}$, $d=N \kappa_u - \frac{\nu_2}{\nu_1}D$ and $b=N - \frac{\nu_2}{\nu_1\kappa_u}D$. Note that $b$ should satisfy $0<b\leq \hat{N}$, and from Theorem~\ref{Theorem:SP}, we know that  $p=\alpha \frac{\nu_1d+\nu_2 D}{\nu_1 N}$ is optimum if it is in the interval $[\frac{\alpha d}{\hat{N}},\frac{\alpha}{\zeta}]$. The latter yields that $\frac{\alpha d}{\hat{N}}\leq p=\alpha \kappa_u \leq \frac{\alpha}{\zeta}$, which yields that $\kappa_u\leq \frac{1}{\zeta}$ and $b=\frac{\alpha d}{p}\leq \hat{N}$. Note that these conditions automatically lead to a feasible demand: from $b=\frac{\alpha d}{p}\leq \hat{N}$, then $d\leq \frac{\hat{N}p}{\alpha}=\hat{N}\kappa_u\leq \frac{\hat{N}}{\zeta}$. Thus, in this stable outcome, $\kappa_u\leq \frac{1}{\zeta}$ and $0<b\leq \hat{N}$.
	The result follows.

\section{Proof of Theorem~\ref{theorem:SPd}}
	By \eqref{equ:SPutility}, the utility of the SP when choosing the tuple  $\l d,1,\frac{\alpha \log\l \kappa_{CP}\zeta\r}{\zeta},1,\zeta d\r$ is:
	\small
	\begin{equation}
	u_{SP}=\alpha d \log\l \kappa_{SP} \zeta \r +\nu_1 d \log\l \kappa_{SP} \zeta\r + \nu_2 D \log \l \kappa_{SP} \frac{N-\zeta d}{D} \r
	\nonumber
	\end{equation} 
	\normalsize
	
	First, note that the expression of the utility is concave in $d$. Thus, the first order condition gives the optimum answer. The solution of the first order condition is: 
	\begin{equation}
	d^*=\frac{N}{\zeta}-\frac{1}{\l \alpha+\nu_1\r \log\l \kappa_{SP}\zeta\r}
	\nonumber
	\end{equation}
	Based on Theorem~\ref{theorem:stable}, for $d^*$ to be the demand corresponding to the minimum quality stable outcome, it should satisfy the constraint  $0<d^*\leq \frac{\hat{N}}{\zeta}$. If $d^*>\frac{\hat{N}}{\zeta}$ or $d^*<0$, the concavity implies that the optimum is $d=\frac{\hat{N}}{\zeta}$ or $d=0$, respectively.

\section{Proof of Theorem~\ref{Theorem:longrunSP}}
	First, note that in Theorem~\ref{theorem:stable}, when $\kappa_u=\frac{1}{\zeta}$, the stable points 2, 3, and 4 can occur. In addition, the demand is fixed in the stable point 2 and 4, while it can take a range of values for the stable point 3, including the fixed demands in the other two stable points. On the other hand, the price is fixed in all these three stable points. In these stable points, the stable quality is $\frac{b}{d}=\zeta$. Thus, by \eqref{equ:SPutility}, the payoff of the SP is:
	$$
	u_{SP} =  p \zeta d+ \nu_1   d \log\l \kappa_{SP} \zeta \r+\nu_2 D \log \l \kappa_{SP} \frac{N-\zeta d}{D}\r 
	$$
	\normalsize
	Therefore, for a fixed demand, the payoff of the SP in this case is an increasing function of the price $p$. Note that the third stable point, i.e. minimum quality stable point, has the highest price among the possible stable points since $\log(\kappa_{CP} \zeta)>1$.\footnote{In Section~\ref{section:model}, we assumed that in order to have a non-trivial problem $\kappa_{CP}\zeta>e$}    In addition, it can take a range of demand including the fixed demands of the stable point 2 and 4 . Thus, the third stable outcome of the market yields the highest payoff for the SP. The optimum demand is chosen by Theorem~\ref{theorem:SPd} as discussed before. The result follows.

\section{Proof of Theorem~\ref{theorem:CP_long_sighted}}

First note that from Corollary~\ref{remark:indifferent}, the minimum quality stable point, i.e. $\l d,1,\frac{\alpha \log\l \kappa_{CP}\zeta\r}{\zeta},1,\zeta d\r$, yields a payoff of zero for the CP. From Lemma~\ref{corollary:quality_24}, in both the maximum bit sponsorship, i.e. $\l \kappa_u \hat{N}, 1, \alpha \kappa_u, 1, \hat{N}\r$, and  interior stable point, i.e. $\l N \kappa_u - \frac{\nu_2}{\nu_1}D, 1, \alpha \kappa_u,1, N - \frac{\nu_2}{\kappa_u \nu_1}D\r$,  the stable quality ($\frac{b}{d}$) is $\frac{1}{\kappa_u}$. Thus, using \eqref{equation:utilityCP}, the payoff of the CP in these plausible stable outcomes is:
\begin{equation}
u_{CP}=\alpha d \l \log\l \frac{\kappa_{CP}}{\kappa_{u}}\r -1\r
\nonumber
\end{equation}
Note that from the condition for plausibility of these stable points ($\kappa_u\leq \frac{1}{\zeta}$), and our previous assumption that $\kappa_{CP}\zeta >e$\footnote{The condition to have a non-trivial problem.}, $\frac{\kappa_{CP}}{\kappa_{u}}>e$. Thus, the payoff of the  CP in the maximum bit sponsorship and the interior stable point is strictly greater than zero, and is strictly increasing with respect to the demand. Given that the quality $\frac{b}{d}=\frac{1}{\kappa_u},$ and is a constant, the higher the number of sponsored bits, the higher the demand, and therefore the higher the payoff of the CP would be. In addition, note that the number of sponsored bits in the maximum bit sponsorship point is greater than or equal to the number of sponsored bits in the interior stable point. Thus, the utility of the CP in the  maximum bit sponsoring point is greater than or equal to the utility in the interior stable point. The result follows.

\section{Proof of Theorem~\ref{theorem:bargaining}}
	$d_{CP}$ and $d_{SP}$ are independent of $d$ and $p$. In addition, $u_{excess}=(u_{CP}-d_{CP})+(u_{SP}-d_{SP})$ is independent of $p$, and is only a function of $d$. Thus, for a given $d$, using equation (2) of \cite{bargainingsplit}, the optimum value of $p$ is such that:
	\begin{equation}\label{equ:optimum}
	\frac{u_{CP}-d_{CP}}{w}=\frac{u_{SP}-d_{SP}}{1-w}
	\nonumber
	\end{equation}
	if the solution for $p$ satisfies other constraints. Thus, by plugging the expressions for the CP and the SP (\eqref{equ:uCPbarg} and \eqref{equ:uSPbarg}), the candidate optimum $p$ as a function of $d$ is:
	\begin{equation}\label{equ:optimum_p_2}
	\begin{aligned}
	p^*&=\frac{\kappa_u}{d} \Big{(}(1-w)(u_{ad}-d_{CP})-w(u_s-d_{SP})\Big{)}\\
	&=\frac{\kappa_u}{d} \Big{(}(u_{ad}-d_{CP})-wu_{excess}\Big{)}
	\end{aligned}
	\end{equation}
	
	Substituting \eqref{equ:optimum_p_2} in the objective function of \eqref{equ:nash_solution_2} and using \eqref{equ:uCPbarg} and \eqref{equ:uSPbarg} yield the new objective function:
	\begin{equation}
	w^w (1-w)^w (u_{ad}-d_{CP}+u_s-d_{SP})=w^w (1-w)^w u_{excess}
	\nonumber
	\end{equation}
	\normalsize
	Substituting \eqref{equ:optimum_p_2}, \eqref{equ:uCPbarg}, and \eqref{equ:uSPbarg}  in the constraint $u_{CP}\geq d_{CP}$, yields the new constraint $u_{ad}-d_{CP}+u_s-d_{SP}\geq 0$. Similar substitutions for $u_{SP}\geq d_{SP}$ yields the same constraint. Thus, the optimization can be written as,
	\begin{equation}
	\small 
	\begin{aligned}
	&\max_{d} u_{excess}\\
	& \text{s.t.}\\
	&\qquad 0\leq d\leq \hat{N}\kappa_u \\
	&u_{ad}-d_{CP}+u_s-d_{SP}=u_{excess}\geq 0
	\end{aligned}
	\end{equation}
	\normalsize
	The theorem follows from above and \eqref{equ:optimum_p_2}.

%
%
%


%





\ifCLASSOPTIONcaptionsoff
  \newpage
\fi

\newpage
\section{Comments on the Approximations in the Model}\label{appendix:approx}

Note that in our model, we have assumed that either the CP sponsor a quality for her end-users or she uses the best effort scenario (both cannot happen together). This means that in the second case (no sponsoring) the demand of the CP would be added to the pool of the demand for the best effort scenario, i.e. would be added to $D$. In our model, we do not considered the augmentation since we naturally expect the demand for a CP to be much smaller than the total demand for all CPs. In this section, we discuss if and how the results change if we consider this augmentation.

\subsection{Change in the Model}
The augmentation in the demand can be accommodated as follows:

\subsubsection{The SP}
\small
\begin{equation}\label{equ:SPutility}
u_{SP,t}(p_t) =  p_t b_t+ u_s(b_t(p_t))
\end{equation}
\normalsize 
where now the users' satisfaction function, i.e. $u_s(.)$, becomes:
\footnotesize
\begin{subnumcases}{u_s(b_t)=}
  \nu_1   d_{t} \log\l \frac{\kappa_{SP} b_t}{d_{t}}\r+\nu_2 D \log \l \kappa_{SP} \frac{N-b_t}{D}\r \label{equ:new_A} \\
   \qquad \qquad \qquad \qquad \qquad \qquad \qquad  \ \  \mbox{if }  d_t>0 \& \ b_t>0 \nonumber \\
 	\nu_2 (D+d_t) \log \l \kappa_{SP} \frac{N-b_t}{(D+d_t)}\r  \quad \mbox{Otherwise} \label{equ:new_B}
\end{subnumcases}


\normalsize

Note that \eqref{equ:new_A} is the same as \eqref{equ:old_A}. Thus, the only change is for the case of no sponsoring ($d_t=0$ or $b_t=0$) \eqref{equ:new_B} in which $d_t$ is added to the total demand of the best effort scenario, i.e. $D$. Note that \eqref{equ:new_B} becomes similar to \eqref{equ:old_B} when $d_t<<D$. 

\subsubsection{The CP} Note that we have considered that the CP receives a payoff of zero in the case of no sponsoring. This is justified as in many cases in which if the CP does not sponsor the data, then she will only transmit the content with a best effort scenario and because of limited bandwidth do not transmit advertisements.  An example of this can be seen in Youtube: If the quality of the content is low, then Youtube automatically skips the ad.  Thus, in this case, when the CP transmits with best effort, it receives zero ad revenue.

\subsection{Change in the Analytical Results}\label{appendix:approx_simulation}
This change may only affect the results when (i) the exact expression of $u_s(b_t)$ in the case of no sponsoring, i.e. \eqref{equ:new_B}, or (ii) the expressions for the optimum strategies of the SP, i.e. $p^*_t$ is used. Note that in Theorem 1 we do not use any of (i) and (ii). Thus, the Theorem would be similar to before.  In the next paragraph, we will argue that the expressions for $p^*_t$ in Theorem 2 would be the same as before. In Theorem 3, we only use the expression for  $p^*_t$. Thus, the results of Theorem 3 would be the same as before. In Theorems 4 and 5, we use \eqref{equ:new_A} (which is similar to \eqref{equ:old_A}) and the expression of $p^*_t$ (which are the same as before). Thus, the the results for these theorems also would be as before. For the long-sighted case, we do not use the exact expression of $u_s(b)$. Thus, all the results of long-sighted would be as before.


Now, we argue that the expressions for the optimum strategies of the SP in Theorem 2 would be the same as before.  The first paragraph of the proof would be the same as before since we do not use \eqref{equ:new_B}. In addition, in the next paragraph of the proof and when characterizing the optimum strategies of the SP, we focus on $0<d_t\leq \frac{\hat{N}}{\zeta}$ and $p_t\leq \frac{\alpha \log(\kappa_{CP}\zeta)}{\zeta}$. With these conditions, $b_t>0$ and sponsoring occurs. Thus, we use the expression of $u_{SP}(p_t)$ for the case of sponsoring \eqref{equ:new_A} which is the same as before, i.e. \eqref{equ:old_A}. Thus, the expressions for the optimum strategies of the SP would be the same as before.  

Note that the only change that should be applied to Theorem~\ref{Theorem:SP} is to the expression of $u_{SP,0}$, i.e. the utility of the SP in the case of no sponsoring. This utility should be changed from \eqref{equ:old_B} to \eqref{equ:new_B}, i.e. $u_{SP,0}=v_2(D+d_t)\log(\kappa_{SP}\frac{N}{D+d_t})$.

\subsection{Change in the Simulation Results} Since the SP now receives a greater utility in the case of no sponsoring (compare \eqref{equ:old_B} with \eqref{equ:new_B}), the option of no-sponsoring becomes more attractive for the SP.  We have redone all the simulations with the new model.  We comment on all the changes in the numerical results, and present the results for one sample scenario (Figure~\ref{figure:increase_SP_equal_newmodel}). 

\textbf{Changes to Figures 3 to 6:}  In the numerical results, we observe that these figures will remain similar  in general.  The only change is that  the region of no sponsoring for large $v_2$ slightly increases (since the no-sponsoring is now more attractive for the SP). Thus, the insights associated with these figures would be the same as before. 

\textbf{Changes to Figures 7 and 9:} Now, consider the numerical results for the long-sighted scenario. In this case, for Figures 7 and 9, we observe the thresholds for the jump to no-sponsoring region slightly decreases (as we expect because of the explanations in the first paragraph of Appendix~\ref{appendix:approx_simulation}). Otherwise, the figures would be the same as before. This is because of the fact that the utility of the CP is the same as before.

\textbf{Changes to Figures 8 and 10:} We plot the counterpart of Figure 8, in Figure~\ref{figure:increase_SP_equal_newmodel}. Note that the results are similar. The only difference is that the percentage of increase in the utility of the SP decreases in some regions (regions in which short-sighted yields no sponsoring). This is because of the increase in the utility of the SP in the case of no sponsoring. The same happens to Figure 9. Thus, the insights associated with these figures remain the same.

	\begin{figure}[t]
		\centering
		\includegraphics[width=0.4\textwidth]{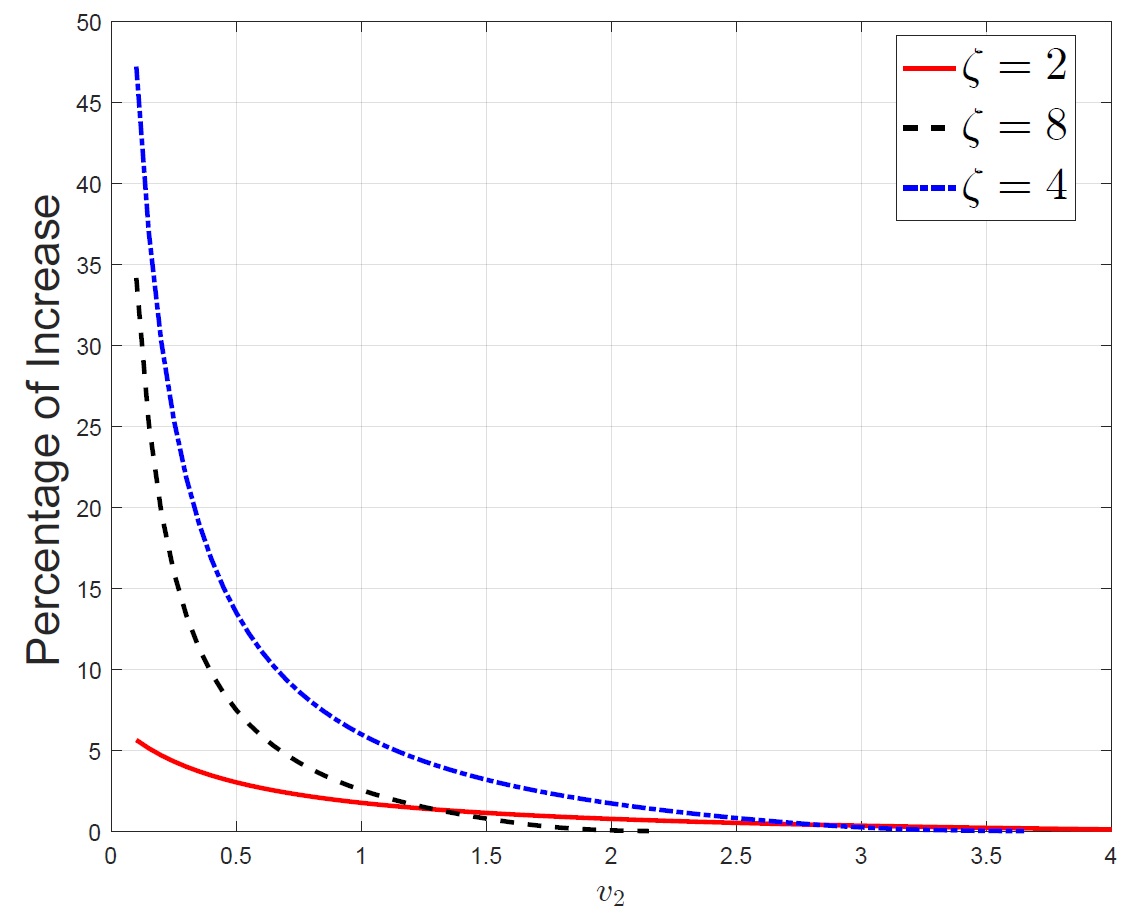}
		\caption{The percentage of increase in the utility of the SP when $\kappa_u =\frac{1}{\zeta}$ with respect to $v_2$ for different values of $\zeta$ (new model).}
		\label{figure:increase_SP_equal_newmodel}
		\end{figure}
		
			\begin{figure}[t]
		\centering
		\includegraphics[width=0.4\textwidth]{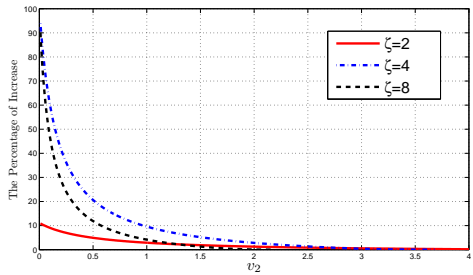}
		\caption{The percentage of increase in the utility of the SP when $\kappa_u =\frac{1}{\zeta}$ with respect to $v_2$ for different values of $\zeta$ (old model).}
		\end{figure}
		
	\textbf{Changes to Figures 11 and 12:}	Recall that $p^*$ is the price of sponsored bits in the bargaining framework, and is distinct from $p^*_t$ which is the price of sponsored bits in the short-sighted framework. Results reveal that the insights associated with these figures follow the same trend as before. Note that $p^*$ depends on the disagreement payoff which is the payoff of short-sighted framework. Thus, the only change to the value of $p^*$ happens when the disagreement yields no sponsoring. In this case, since the payoff of disagreement increases slightly \eqref{equ:new_B}, $p^*$ increases slightly.

\end{document}